\newtheorem{theorem}{Theorem}
\newtheorem{lemma}{Lemma}
\def\etal{{et~al.}}
\def\ie{{i.e.}}
\def\eg{{e.g.}}
\def\F{{\mathcal F}}
\def\L{{\mathcal L}}
\def\S{{\mathcal S}}
\def\T{{\mathcal T}}
\newcommand{\eps}{\varepsilon}
\newcommand{\old}[1]{}
\newcommand{\later}[1]{}
\begin{document}

\title{On the approximability of covering points by lines\\and related
  problems\thanks{%
Supported in part by NSF grant DMS-1001667 awarded to the first author.
The work of the second author was performed during his sabbatical leave
in Fall 2012.}}

\author{Adrian Dumitrescu\thanks{%
Department of Computer Science,
University of Wisconsin--Milwaukee, USA\@.
Email:~\texttt{dumitres@uwm.edu}.}
\and
Minghui Jiang\thanks{%
Department of Computer Science,
Utah State University,
Logan, USA\@.
Email: \texttt{mjiang@cc.usu.edu}.}}

\maketitle
\thispagestyle{empty}

\begin{abstract}
Given a set $P$ of $n$ points in the plane, {\sc Covering Points by Lines}
is the problem of finding a minimum-cardinality set $\L$ of lines
such that every point $p \in P$ is incident to some line $\ell \in \L$.
As a geometric variant of {\sc Set Cover}, {\sc Covering Points by Lines} 
is still NP-hard. 
Moreover, it has been proved to be APX-hard, and hence
does not admit any polynomial-time approximation scheme unless P $=$ NP\@.
In contrast to the small constant approximation lower bound
implied by APX-hardness,
the current best approximation ratio for {\sc Covering Points by Lines} 
is still $O(\log n)$, namely the ratio achieved by the greedy
algorithm for {\sc Set Cover}.

In this paper,
we give a lower bound of $\Omega(\log n)$
on the approximation ratio of the greedy algorithm for {\sc Covering Points by Lines}.
We also study several related problems including
{\sc Maximum Point Coverage by Lines},
{\sc Minimum-Link Covering Tour},
{\sc Minimum-Link Spanning Tour},
and {\sc Min-Max-Turn Hamiltonian Tour}.
We show that all these problems are either APX-hard or at least NP-hard.
In particular, our proof of APX-hardness of {\sc Min-Max-Turn Hamiltonian Tour}
sheds light on the difficulty of {\sc Bounded-Turn-Minimum-Length Hamiltonian Tour},
a problem proposed by Aggarwal et~al.\ at SODA 1997.

\later{
\medskip
\medskip
\textbf{\small Keywords}:
Spanning tours,
Covering tours,
NP-hardness,
APX-hardness.
}

\end{abstract}

\newpage
\setcounter{page}{1}
\setcounter{footnote}{0}

\section{Introduction}

Given a set $U$ of $n$ elements and a family $\F$ of $m$ subsets of $U$,
{\sc Set Cover} is the problem of finding a minimum-cardinality subfamily
$\F' \subseteq \F$ whose union is $U$.
It is well-known that
{\sc Set Cover} can be approximated within
$H_n \le \ln n + 1$~\cite{Jo74,Lo75,Ch79},
where $H_n$ is the $n$th harmonic number,
by a simple greedy algorithm that
repeatedly selects a set that covers the most remaining elements;
a more refined analysis~\cite{Sl97} shows that the approximation ratio
of the greedy algorithm is in fact $\ln n - \ln\ln n + \Theta(1)$.
On the other hand, {\sc Set Cover} 
cannot be approximated in polynomial time
within $c \log n$ for some constant $c > 0$
unless P $=$ NP~\cite{RS97},
and
within $(1-\eps)\ln n$
for any $\eps > 0$
unless NP $\subset$ TIME$(n^{O(\log\log n)})$~\cite{Fe98};
see also~\cite{AMS06,LY94}.

The first problem that we study in this paper
is a geometric variant of {\sc Set Cover}.
Given a set $P$ of $n$ points in the plane,
{\sc Covering Points by Lines}
is the problem of finding a minimum-cardinality set $\L$ of lines
such that every point $p \in P$ is in some line $\ell \in \L$.
(Without loss of generality,
we can assume that $n \ge 2$ and that the lines in $\L$ are selected from
the set $\L_P$ of at most $n \choose 2$ lines
with at least two points of $P$ in each line.)

As a restricted version of {\sc Set Cover},
{\sc Covering Points by Lines} may appear as a much easier problem.
Indeed, in terms of parameterized complexity, {\sc Set Cover} is clearly
W[2]-hard when the parameter is the number $k$ of sets in the solution
(as easily seen by a reduction from the canonical W[2]-hard problem
$k$-{\sc Dominating Set}),
while {\sc Covering Points by Lines} 
admits very simple FPT algorithms based on standard
techniques in parameterized complexity such as bounded search tree
and kernelization~\cite{LM05}; see also~\cite{GL06,WLC10}.
In terms of approximability, however,
the current best approximation ratio for {\sc Covering Points by Lines} is still
the same $O(\log n)$ upper bound for {\sc Set Cover}
achieved by the greedy algorithm.
No matching lower bounds are known for {\sc Covering Points by Lines},
although it has been proved to be NP-hard~\cite{MT82}
and even APX-hard~\cite{BHN01,KAR00};
the APX-hardness of the problem implies a constant lower bound
on the approximation ratio, in particular, the problem
does not admit any polynomial-time approximation scheme unless P $=$ NP.

We first give an asymptotically tight lower bound
on the approximation ratio of the greedy algorithm:

\begin{theorem}\label{thm:greedy}
The approximation ratio of the greedy algorithm for {\sc Covering Points by Lines} is
$\Omega(\log n)$.
\end{theorem}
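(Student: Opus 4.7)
The plan is to exhibit, for arbitrarily large $n$, a point set $P$ with $|P|=n$ on which the greedy algorithm (under adversarial tie-breaking) outputs a cover of size $\Omega(\log n)$ times $\opt$. My approach is to realize the spirit of the classical Johnson--Chv\'atal tight example for greedy {\sc Set Cover} as a point-line configuration in the plane.

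I would fix a parameter $k$ and construct $P$ with $n=\Theta(k^2)$ points lying on $k$ base lines $L_1,\ldots,L_k$ in general position, with roughly $k$ points per line; this immediately gives $\opt \le k$. Placing the points on each $L_i$ along carefully chosen arithmetic progressions, I would ensure that a large family of \emph{transversal} lines---lines crossing many of the $L_i$'s, each containing up to $k$ points---is also present, so that with adversarial tie-breaking greedy prefers transversals to the $L_i$'s. The aim is then to arrange matters so that at every stage of the greedy execution, some transversal covers approximately $r/k$ of the $r$ currently uncovered points; a standard harmonic-sum argument then yields $\Omega(k\log(n/k))=\Omega(k\log n)$ greedy steps, and since $\opt=\Theta(k)$ the ratio is $\Omega(\log n)$.

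The main technical obstacle is guaranteeing this supply of large transversals throughout the greedy process: after greedy has chosen $t$ transversals, one must still exhibit a new transversal that shares at most one point with each previous choice yet passes through $\Theta(k-t)$ uncovered points on the $L_i$. Geometrically, the Euclidean axiom that two distinct lines meet in at most one point severely restricts how densely transversals can be packed, so an arbitrary configuration will not work. I plan to overcome this by using an explicit lattice placement so that transversals of many small rational slopes are abundant, together with a counting argument showing that enough of them survive the prior greedy deletions at each step.

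Finally, I would verify $\opt=\Theta(k)$ from below by noting that any line other than an $L_i$ contains $O(k)$ points of $P$, so a cover of size $o(k)$ could account for only $o(k^2)=o(n)$ points, a contradiction.
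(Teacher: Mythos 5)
Your high-level plan---$k$ base lines certifying $\opt\le k$, plus a supply of transversal lines that greedy prefers, forcing $\Theta(k\log k)$ greedy steps on $n=\Theta(k^2)$ points---is the right shape; it is essentially the primal mirror of the classic tight example for greedy {\sc Set Cover}. But the proposal has a genuine gap exactly where you flag the ``main technical obstacle'': the existence of a planar point set realizing the required transversal structure \emph{is} the content of the theorem, and neither the lattice placement nor the survival counting argument is actually given. Moreover, the quantitative target you state is inconsistent with your own harmonic-sum claim: if the $(t+1)$-st transversal chosen by greedy covers $\Theta(k-t)$ uncovered points, the $\Theta(k^2)$ points are exhausted after only $O(k)$ steps and the ratio is $O(1)$. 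To get $\Theta(k\log k)$ steps the pick sizes must decay like $k e^{-t/k}$; concretely one wants, for each $i$ from $k$ down to $2$, a group of $\lfloor k/i\rfloor$ pairwise disjoint transversals of exactly $i$ points each, arranged so that (i) once the groups of size greater than $i$ are exhausted, every base line retains at most $i-1$ uncovered points---otherwise greedy grabs a base line and terminates early---and (ii) no line other than the designated base lines and transversals ever contains more uncovered points than the intended next pick. Neither the balance invariant (i) nor the general-position requirement (ii) is addressed, and a lattice placement is a risky starting point for (ii), since integer grids abound in unintended rich lines.

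The paper circumvents the realizability difficulty by working in the dual. It takes the standard greedy lower-bound graph for {\sc Vertex Cover} (bipartite, $|A|=k$, $B=\bigcup_{i=2}^k B_i$ with $|B_i|=\lfloor k/i\rfloor$, each $B_i$-vertex of degree $i$ and the $B_i$-neighborhoods pairwise disjoint in $A$), realizes it as {\sc Covering Segments by Points} with $A$ and $B$ on two parallel lines following Brimkov~\etal, extends the segments to full lines while placing the $B$-points one at a time so that every spurious intersection lies on exactly two lines, and then invokes point--line duality to transfer the $\Omega(\log n)$ bound to {\sc Covering Points by Lines}. In the dual, all one must control is the multiplicity of intersection points, which is easy (at each placement step, avoid finitely many marked positions on the bottom line); your primal construction would instead have to control the incidences of $\Theta(k^2)$ points with all $\Theta(k^4)$ lines they span while simultaneously engineering the degree sequence above. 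If you want to salvage your route, the cleanest fix is to build the instance for the dual problem {\sc Covering Lines by Points} and dualize, rather than placing the points directly.
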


We also prove that {\sc Covering Points by Lines} is APX-hard, unaware\footnote{We
  thank an anonymous source for bringing this to our attention.} of
the previous APX-hardness results of Brod\'en~\etal~\cite{BHN01} and
Kumar~\etal~\cite{KAR00}:
%
\begin{theorem}[Brod\'en \etal~\cite{BHN01} and Kumar \etal~\cite{KAR00}]
\label{thm:min}
{\sc Covering Points by Lines} is APX-hard~\textup{\cite{BHN01,KAR00}}.
This holds even if no four of the given points are collinear~\textup{\cite{BHN01}}.
\end{theorem}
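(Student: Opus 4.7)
I would prove APX-hardness of \textsc{Covering Points by Lines} by a gap-preserving reduction from a canonical APX-hard variant of \textsc{Vertex Cover}, for instance \textsc{Vertex Cover} on graphs of maximum degree three, shown APX-hard by Alimonti and Kann. Given such a cubic graph $G = (V, E)$, the construction assigns a distinct line $\ell_v$ in the plane to each vertex $v$ and places a point $p_e = \ell_u \cap \ell_v$ for each edge $e = (u, v)$, producing a point set $P$ of size $|E|$. The lines are chosen in sufficiently general position---say, by taking their slopes to be $\tan(k\pi/N)$ for distinct $k$ and then perturbing the intercepts generically---so that no three of them are concurrent and no three intersection points are collinear unless the underlying edges share a common vertex. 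Since $G$ is cubic, each $\ell_v$ then contains exactly three points of $P$ and no four points of $P$ are collinear. The existence of such a generic position is standard: each forbidden incidence cuts out a proper algebraic subset of the $2|V|$-dimensional parameter space, so the admissible region is open and dense.

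Any vertex cover $C$ of $G$ directly yields a line cover $\{\ell_v : v \in C\}$ of $P$, giving $\opt_{LC}(P) \le \opt_{VC}(G)$. Conversely, any line cover $\L$ of $P$ can be converted into a vertex cover of size at most $2|\L|$: retain $v$ for each $\ell_v \in \L$, and for each other line in $\L$ (which by general position passes through exactly two points $p_{e_1}, p_{e_2}$), pick one endpoint from each of the two edges. This gives $\opt_{VC}(G) \le 2\,\opt_{LC}(P)$.

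The delicate step is to upgrade this factor-two correspondence to a gap-preserving reduction. The loss of a factor of two in the second direction stems from a single non-vertex line potentially covering two non-adjacent edges and thereby replacing two vertex lines, so the reduction must be augmented with gadgets that render non-vertex lines strictly less efficient than vertex lines in every near-optimal solution. A natural strategy is to attach to each edge a small cluster of auxiliary points, placed so as not to create any new four collinear points, yet so that any line other than some $\ell_v$ covers at most a constant number of the total while each $\ell_v$ covers a number growing with the cluster size. An $(1+\eps)$-approximate line cover of the augmented instance must then consist almost entirely of vertex lines, and the conversion above produces an $(1+O(\eps))$-approximate vertex cover of $G$, contradicting cubic \textsc{Vertex Cover} APX-hardness for small enough $\eps$. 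The main obstacle is realizing this augmented configuration without violating the ``no four collinear'' constraint: the auxiliary points per edge must be numerous enough to defeat non-vertex-line shortcuts yet sparse enough to avoid creating any new collinear quadruple, and balancing these two requirements---together with verifying the geometric realizability of the whole configuration---is the crux of the proof.
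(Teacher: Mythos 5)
There is a genuine gap, and you have correctly located it yourself: the ``delicate step'' of upgrading the factor-two correspondence to a gap-preserving reduction is precisely the content of the theorem, and your proposal does not carry it out. Worse, the gadget you sketch cannot exist in the form described. You want each vertex line $\ell_v$ to cover a number of auxiliary points growing with the cluster size; those auxiliary points must then lie \emph{on} $\ell_v$, but in the cubic construction $\ell_v$ already contains the three edge points $p_{e_1},p_{e_2},p_{e_3}$, so adding even one auxiliary point to $\ell_v$ creates four collinear points and destroys the very constraint the theorem is supposed to preserve. Even if one abandons the no-four-collinear strengthening, loading the vertex lines with many points changes the objective so drastically that soundness must be re-argued from scratch: a near-optimal cover of the augmented instance is now dominated by the cost of covering the auxiliary clusters, and nothing prevents it from covering each cluster with a cheap non-vertex line (or from covering the clusters of two vertex lines in ways that do not decode to a vertex cover). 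The quantitative heart of the argument---showing that every near-optimal line cover can be normalized into one that decodes to a near-optimal combinatorial solution---is asserted but never proved.

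For comparison, the paper reduces not from cubic {\sc Vertex Cover} but from {\sc E$3$-Occ-Max-E$2$-SAT}, using a gadget of four points per variable in which exactly two pairs-of-lines configurations ($\{e_i^{1,2},e_i^{3,4}\}$ versus $\{e_i^{1,3},e_i^{2,4}\}$) cover all four points, encoding true/false; clause points are placed on the appropriate literal lines. The normalization you are missing is supplied there by an explicit canonical-form lemma (Lemma~\ref{lem:canonical}), which shows by a sequence of local replacements that \emph{any} set of $k$ lines covering $x$ points can be converted, without loss, into one where each variable gadget is covered in one of the two decodable ways; the gap-preserving accounting then follows from exact counting identities ($k^* \le 2n + \lceil(m-w^*)/2\rceil$ and its converse) rather than from a factor-two inequality. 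If you wish to pursue the vertex-cover route, you would need an analogue of that canonical-form lemma---a concrete local-exchange argument showing that non-vertex lines can always be swapped out at no cost---and as it stands your construction gives no handle for such an exchange, since a single generic line covering two independent edge points is genuinely as efficient as a vertex line covering two edge points of a degree-two-used vertex.
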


A problem closely related to {\sc Set Cover} is the following.
Given a set $U$ of $n$ elements, a family $\F$ of $m$ subsets of $U$,
and a number $k$,
{\sc Maximum Coverage} is the problem of finding a subfamily $\F' \subseteq \F$
of $k$ subsets whose union has the maximum cardinality.
In the setting of {\sc Covering Points by Lines},
given a set $P$ of $n$ points in the plane and a number $k$,
{\sc Maximum Point Coverage by Lines}
is the problem of finding $k$ lines
that cover the maximum number of points in $P$.
For the general {\sc Maximum Coverage} problem,
the greedy algorithm that repeatedly selects a set that covers the most
remaining elements achieves an approximation ratio of
$1 - 1/e = 0.632\ldots$~\cite[Section~3.9]{Ho97};
this is also the current best approximation ratio for 
{\sc Maximum Point Coverage by Lines}. 
On the other hand, {\sc Maximum Coverage} cannot be approximated better than
$1 - 1/e + \eps$ for any $\eps > 0$ unless P $=$ NP~\cite{Fe98},
while {\sc Maximum Point Coverage by Lines} is only known to be NP-hard as implied by
the NP-hardness of {\sc Covering Points by Lines}~\cite{MT82}.
We show that {\sc Maximum Point Coverage by Lines} is APX-hard too:

\begin{theorem}\label{thm:max}
{\sc Maximum Point Coverage by Lines} is APX-hard.
This holds even if no four of the given points are collinear.
\end{theorem}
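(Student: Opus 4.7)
The plan is to give a gap-preserving reduction directly from \textsc{Covering Points by Lines}, using the APX-hardness furnished by Theorem~\ref{thm:min} in the restricted ``no four collinear'' form. The reduction is trivial at the level of instances: given a point set $P$ of $n$ points on which \textsc{Covering Points by Lines} is gap-hard, I would feed $(P,k^*)$ to \textsc{Maximum Point Coverage by Lines}, where $k^*$ is the threshold value for which it is NP-hard to distinguish $\opt(P)\le k^*$ from $\opt(P)>(1+\alpha)k^*$ (for the constant $\alpha>0$ given by Theorem~\ref{thm:min}). The entire difficulty will lie in the analysis, not in the construction.

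In the YES case there is a set of $k^*$ lines covering every point of $P$, so $\val(P,k^*)=n$. In the NO case, suppose some choice of $k^*$ lines covers $M$ points. Because no four points of $P$ are collinear, every line through $P$ contains at most three points of $P$, so the remaining $n-M$ points can be covered by at most $\lceil(n-M)/3\rceil$ additional lines. Combining these with the chosen $k^*$ yields a full cover of size at most $k^*+(n-M)/3$, which must exceed $(1+\alpha)k^*$; hence $M<n-3\alpha k^*$. The same ``$\le 3$ points per line'' bound forces $\opt(P)\ge n/3$ and therefore $k^*\ge n/3$ in the YES case (and hence in the instances we care about), so $M<(1-\alpha)n$. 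Thus the YES/NO gap for \textsc{Maximum Point Coverage by Lines} is at least the constant $1-\alpha<1$, which prevents a PTAS unless P$=$NP and gives APX-hardness.

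A PTAS for \textsc{Maximum Point Coverage by Lines} with relative error $\eps<\alpha$ would produce a set of $k^*$ lines covering at least $(1-\eps)n$ points in any YES instance and at most $(1-\alpha)n$ points in any NO instance, contradicting Theorem~\ref{thm:min}. Formally, this is an L-reduction (or equivalently a gap-preserving reduction), with the linearity constants controlled by $\alpha$ and by the fact that $k^*=\Theta(n)$. The ``no four collinear'' conclusion in the theorem is inherited verbatim, since the point set is copied unchanged from the instance produced by Brod\'en~\etal

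The main obstacle I anticipate is the per-line bound that makes the whole gap argument quantitative: without the three-points-per-line property, a single line could cover an arbitrary fraction of $P$, destroying both the lower bound $k^*\ge n/3$ and the bound $n-M\le 3\cdot(\text{extra lines})$. This is precisely why the reduction piggybacks on the stronger statement of Theorem~\ref{thm:min} (APX-hardness under the ``no four collinear'' restriction) rather than on its general form, and why the resulting hardness for \textsc{Maximum Point Coverage by Lines} also holds under the same restriction.
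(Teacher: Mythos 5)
Your reduction is genuinely different from the paper's. The paper proves Theorem~\ref{thm:max} by reusing the E$3$-Occ-Max-E$2$-SAT construction behind Theorem~\ref{thm:min}: Lemma~\ref{lem:max-iff} shows that $w$ satisfiable clauses correspond exactly to $4n+w$ points coverable by $2n$ lines, and Lemma~\ref{lem:max} turns this into a gap-preserving reduction using $w^*\ge\frac34 m$ to relate the two optima. You instead reduce from the gap version of \textsc{Covering Points by Lines} itself as a black box, using only the ``no four collinear'' promise. Your route is more modular (it would apply to any family of hard no-four-collinear instances, and is in the same spirit as the paper's Lemma~\ref{lem:covering} for covering tours), at the cost of needing the side condition $k^*=\Omega(n)$; you correctly recover that from the three-points-per-line bound, which forces $\opt(P)\ge n/3$ and lets you discard instances with $k^*<n/3$ as automatic NO instances. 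The paper's route avoids this bookkeeping because it controls both optima explicitly through the SAT instance.

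One step of your analysis is wrong as written, though harmlessly so. You claim the $n-M$ uncovered points ``can be covered by at most $\lceil(n-M)/3\rceil$ additional lines'' because no line contains more than three points. That hypothesis bounds the coverage of each line from \emph{above}, so it yields a \emph{lower} bound of $\lceil(n-M)/3\rceil$ on the number of lines needed, not an upper bound; if the leftover points happen to be in general position, $\lceil(n-M)/3\rceil$ lines do not suffice. The guaranteed upper bound is $\lceil(n-M)/2\rceil$, obtained by pairing the leftover points arbitrarily (exactly as in the paper's Lemma~\ref{lem:min-direct}). Substituting this weakens your NO-case conclusion from $M<n-3\alpha k^*$ to roughly $M<n-2\alpha k^*+2$, hence $M<(1-2\alpha/3)n+2$ via $k^*\ge n/3$; the multiplicative gap remains a constant bounded away from $1$, so the APX-hardness conclusion survives with adjusted constants.
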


Our proof of Theorem~\ref{thm:max} is based on the same construction
as in our proof of Theorem~\ref{thm:min}.
In retrospect, we note that
the construction in our proof is the exact dual of the construction
in~\cite{BHN01}: we cover points by lines; they cover lines by points.
For completeness, we include our proofs of Theorems~\ref{thm:min} and~\ref{thm:max}
in the appendix.

Instead of using lines,
we can cover the points using a polygonal chain of line segments.
Given a set $P$ of $n$ points in the plane,
a \emph{covering tour} is a closed chain of segments
that cover all $n$ points in $P$,
and a \emph{spanning tour} is a covering tour in which
the endpoints of all segments are points in $P$.
The problem {\sc Minimum-Link Covering Tour}
(respectively,
{\sc Minimum-Link Spanning Tour})
aims at finding a covering tour
(respectively,
\emph{spanning tour})
with the minimum number of links (\ie, segments).
Arkin~\etal~\cite{AMP03} proved that {\sc Minimum-Link Covering Tour}
is NP-hard; see also~\cite{ABDFMS05,Ji12} and the references therein.
Strengthening this result, our following theorem shows that
{\sc Minimum-Link Covering Tour} is in fact APX-hard:

\begin{theorem}\label{thm:covering}
{\sc Minimum-Link Covering Tour} is APX-hard.
This holds even if no four of the given points are collinear.
\end{theorem}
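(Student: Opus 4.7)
The plan is to establish APX-hardness via a gap-preserving reduction from {\sc Covering Points by Lines} with no four collinear, APX-hard by Theorem~\ref{thm:min}. Given such an instance $P$ with $n$ points and line-cover optimum $k^*(P) \ge n/3$, we construct an instance $P'$ for {\sc Minimum-Link Covering Tour}. With the baseline choice $P' = P$ one already gets two immediate bounds: (A) $k^*(P) \le L^*(P')$, because every link of a covering tour lies on a line through the points of $P$ it covers (or an arbitrary extension), and these lines cover all of $P$; and (B) $L^*(P') \le 2 k^*(P) + O(1)$, by walking along each line of an optimal cover with one link per line plus one connector link between consecutive lines, then closing the chain.

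The factor-of-$2$ slack between (A) and (B) is insufficient for a PTAS-preserving reduction, so the main technical step is to augment $P$ into $P'$---while maintaining the no-four-collinear property---in such a way that $L^*(P') = k^*(P) + O(1)$. The approach is to place a small number of auxiliary ``anchor'' points that force the optimum tour to traverse each line of the optimal cover in a single link, with consecutive links sharing their endpoint at a line--line intersection so that no dedicated connector is needed. Concretely, if the $k^*$ optimal lines can be cyclically ordered $\ell_1, \ldots, \ell_{k^*}$ so that the intersection $p_i = \ell_i \cap \ell_{i+1}$ lies beyond the extent of the points of $P$ on both $\ell_i$ and $\ell_{i+1}$, then the closed polygon through $p_1, \ldots, p_{k^*}$ is a $k^*$-link covering tour. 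The anchors are chosen in general position so as to (i) perturb any parallels or concurrences among candidate cover lines, and (ii) force every optimum cover of $P'$ to admit such a cyclic ordering, all without increasing $k^*$ by more than $O(1)$.

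Once $L^*(P') = k^*(P) + O(1)$ is secured, a $(1+\eps)$-approximation for {\sc Minimum-Link Covering Tour} produces a tour of at most $(1+\eps)(k^*(P) + O(1))$ links, which by (A) converts to a line cover of the same size, i.e., a $(1 + \eps + O(1/n))$-approximation of $k^*(P)$. Letting $\eps \to 0$ would then give a PTAS for {\sc Covering Points by Lines} with no four collinear, contradicting Theorem~\ref{thm:min}.

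The main obstacle is the geometric engineering of the anchor set: the construction must work uniformly for all instances output by Theorem~\ref{thm:min}, preserve the no-four-collinear condition, and not inflate $k^*$ beyond an additive constant. We expect to exploit the explicit structure of the Brod\'en~\etal/Kumar~\etal\ construction---where the lines in an optimal cover have a regular, grid- or arrangement-like configuration in the dual---to place anchors along a surrounding ``frame'' that both fixes the degeneracies in the line arrangement and forces every optimum tour to realize the cyclic polygon above, yielding the tight bound $L^*(P') = k^*(P) + O(1)$.
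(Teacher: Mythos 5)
Your high-level logic is sound (a reduction that loses only a $(1+O(1/n))$ multiplicative factor would indeed transfer APX-hardness, since $k^*(P)=\Theta(n)$ for instances with no four collinear points), but the proof has a genuine gap: the entire reduction rests on the claim that one can place $O(1)$ anchor points so that $L^*(P')=k^*(P)+O(1)$, and this claim is neither proved nor plausibly provable as stated. Two concrete difficulties. First, for the upper bound $L^*(P')\le k^*(P)+O(1)$ you need \emph{some} (near-)minimum line cover of $P'$ whose lines admit a cyclic order in which each consecutive pair meets outside the extent of the covered points on both lines; in the instances produced by Theorem~\ref{thm:min} the relevant cover depends on the (unknown) optimal truth assignment and on an arbitrary pairing of unsatisfied clause points, and many of the candidate cover lines cross inside the configuration, so ``turning at the intersection'' would leave points uncovered and force a dedicated connector link. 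You give no mechanism for ruling this out. Second, adding only $O(1)$ anchor points cannot ``force'' anything about the combinatorial structure of optimal covers or tours: an optimal cover of $P\cup A$ with $|A|=O(1)$ is, up to $O(1)$ lines, an arbitrary optimal cover of $P$, so the anchors have no leverage over where the cover lines intersect. The step ``force every optimum cover of $P'$ to admit such a cyclic ordering \ldots without increasing $k^*$ by more than $O(1)$'' is exactly the hard part, and it is left as a hope (``we expect to exploit\ldots'') rather than a construction. As written, all you have rigorously is $k^*(P)\le L^*(P)\le 2k^*(P)+O(1)$, which, as you yourself note, is not gap-preserving.

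For contrast, the paper avoids the additive-overhead problem entirely by making the overhead exactly multiplicative: it affinely flattens $P$ into a tiny disk so that all covering lines are nearly parallel, then places three rotated copies near the vertices of an equilateral triangle, each copy oriented along a different edge of a circumscribing triangle. Any line covering two points of one cluster, when extended, meets any such line of another cluster near a vertex of the outer triangle, far from all the points; hence $k$ cover lines per cluster chain into a closed tour of exactly $3k$ links with no connector links, and conversely a $3k$-link tour yields a $k$-line cover of one of the clusters (after accounting for inter-cluster links, each of which covers only two points). This gives ``$k$ lines iff $3k$ links,'' so a ratio-$\rho$ algorithm for \textsc{Minimum-Link Covering Tour} yields ratio $\rho$ for \textsc{Covering Points by Lines} with no additive error to control. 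If you want to salvage your route, you would need to either prove the cyclic-linkability of some minimum cover for the specific Theorem~\ref{thm:min} instances, or adopt a replication/flattening device of this kind that converts the connector cost into a uniform multiplicative factor.
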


We also show that {\sc Minimum-Link Spanning Tour} is NP-hard:

\begin{theorem}\label{thm:spanning}
{\sc Minimum-Link Spanning Tour} is NP-hard.
This holds even if no four of the given points are collinear.
\end{theorem}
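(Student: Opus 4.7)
The plan is to reduce from {\sc Minimum-Link Covering Tour}, shown NP-hard (in fact APX-hard) in Theorem~\ref{thm:covering} on inputs in which no four points are collinear. Let $P_0$ be an input point set of size $n$ for {\sc Minimum-Link Covering Tour} with minimum covering tour of $k$ links; the goal is to construct an augmented point set $P \supseteq P_0$, still with no four collinear, so that the minimum-link spanning tour of $P$ has exactly $k+c$ links for a reduction-dependent constant $c$.

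First, I would establish a structural lemma: in any optimal covering tour of $P_0$, one may assume without loss of optimality that every turn point lies at the intersection of two lines, each spanned by a pair of points of $P_0$. The argument is a standard perturbation in which each turn point is slid along one of its incident segments until it meets another spanning line, without increasing the link count. This restricts the candidate turn locations to a polynomial-size set $T$. Then I would construct $P$ by placing near each $t \in T$ a small cluster of ``pinning'' points on the two lines crossing at $t$, chosen in sufficiently generic positions that no four points of $P$ become collinear. The pinning points play a dual role: they must themselves be covered by any spanning tour of $P$, and they are close enough to $t$ to force any turn of a spanning tour that occurs near $t$ to have a vertex at one of them.

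Finally, I would establish the correspondence: the minimum-link spanning tour of $P$ has exactly $k+c$ links. The forward direction takes an optimal covering tour of $P_0$ and ``snaps'' each canonical turn to the corresponding pinning points, yielding a valid spanning tour of $P$ with $k + c$ links. The main obstacle will lie in the reverse direction: one must argue that no spanning tour of $P$ can do strictly better than $k+c$, ruling out the possibility that a spanning tour exploits the pinning points as cheap intermediate covered points on long shortcuts that bypass the covering structure of $P_0$. This requires a careful geometric argument showing that each pinning point, being very close to its canonical location $t$, can only be covered by tour segments nearly aligned with one of the two canonical directions at $t$, so that after deleting the pinning points from any spanning tour of $P$ one recovers a covering tour of $P_0$ with at most $k$ links. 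Combined with the easy forward direction, this yields the desired polynomial reduction and proves NP-hardness of {\sc Minimum-Link Spanning Tour} on inputs with no four collinear points.
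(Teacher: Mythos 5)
Your approach is genuinely different from the paper's: the paper does not reduce from {\sc Minimum-Link Covering Tour} at all, but from a variant of {\sc Hamiltonian Circuit in Cubic Graphs} with a forced edge, placing one vertex point per vertex in convex position and one edge point per edge in the interior of the corresponding chord; a charging argument on the edge points then shows that a spanning tour with $m+2$ links exists iff the graph has the required Hamiltonian path. That route sidesteps exactly the difficulty your reduction runs into, because in a graph-based construction every turn is naturally forced to occur at an input point.

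The central gap in your proposal is the construction of $P$. You place a pinning cluster near \emph{every} candidate turn location $t\in T$, where $T$ is the (polynomial-size, roughly $O(n^4)$) set of all intersections of spanned lines of $P_0$. A spanning tour of $P$ must cover \emph{all} of these clusters, not just the $k$ of them at which some optimal covering tour of $P_0$ actually turns. The links needed to sweep the remaining $|T|-k$ clusters are not a reduction-dependent constant $c$: they depend on the geometry of $P_0$ and constitute, in effect, a second covering-tour instance superimposed on the first. So the claimed identity ``minimum spanning tour of $P$ equals $k+c$'' cannot hold as stated, and there is no obvious repair: you cannot restrict the clusters to the turns of an optimal tour (you do not know one), and you cannot make the extra clusters cheap to cover (e.g., by aligning them) without either violating the no-four-collinear condition or defeating the pinning points' purpose. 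A second, independent problem is the cost of ``snapping'' a turn: a covering-tour turn at $t$ is at the intersection of two lines, but no point of $P$ sits at $t$ (your pinning points lie on the two lines \emph{near} $t$), so converting the turn into legal spanning-tour turns requires at least one extra connector link per turn, giving roughly $2k$ links rather than $k+c$; the exact accounting, and the entire reverse direction ruling out tours that cover pinning points with obliquely oriented segments, is asserted but never argued. As it stands the proposal is a plan with its hardest steps unexecuted and its main quantitative claim false.
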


Given a set $P$ of $n$ points in the plane,
a \emph{Hamiltonian tour} is a closed polygonal chain of
\emph{exactly} $n$ segments whose $n$ endpoints along the chain are
a circular permutation of the $n$ points in $P$.
Note that every Hamiltonian tour is a spanning tour, but not vice versa,
although every spanning tour can be transformed into a Hamiltonian tour
by subdividing some segments into chains of shorter collinear segments.
\begin{figure}[htb]
\centering\includegraphics[scale=0.67]{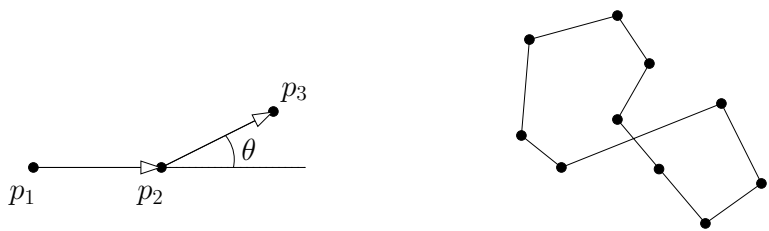}
\caption{%
Left: turning angle at $p_2$.
Right: an obtuse tour of $10$ points.}
\label{fig:f1}
\end{figure}

When three points $p_1,p_2,p_3$ are traversed in this order in a
Hamiltonian tour, 
the \emph{turning angle} at $p_2$, denoted by $\mathrm{turn}(p_1,p_2,p_3)$,
is equal to $\pi - \angle p_1p_2p_3$, where $\angle p_1p_2p_3 \in [0,\pi]$;
see Figure~\ref{fig:f1}. Note that the turning angle belongs to $[0,\pi]$, 
regardless of the direction of the turn (left or right). 
A~tour or path with each turning angle in $[0,\pi/2]$ is called \emph{obtuse}.

In the {\sc Euclidean Traveling Salesman Problem} (ETSP),
given a set $P$ of $n$ points in the plane,
one seeks a shortest Hamiltonian tour that visits each point.
However, frequently other parameters are of interest,
such as in motion planning, where small turning angles are desired.
For example, an aircraft or a boat moving at high speed,
required to pass through a set of given locations,
cannot make sharp turns in its
motion\cite{ACKMS99,ART95,BCL94,Fr89,JC89,LFF07}. 
A rough approximation is provided by paths or tours that are obtuse.
However not all point sets admit obtuse tours or even obtuse paths.
For instance, some point sets require turning angles
at least $5\pi/6$ in any Hamiltonian path~\cite{FW97}.
Moreover, certain point sets (\eg, collinear)
require the maximum turning angle possible, namely $\pi$,
in any Hamiltonian tour.

Aggarwal~\etal~\cite{ACKMS99} have studied the following variant of angle-TSP,
which we refer to as {\sc Min-Sum-Turn Hamiltonian Tour}:
Given $n$ points in the plane,
compute a Hamiltonian tour of the points
that minimizes the total turning angle.
The total turning angle of a tour is the sum of the turning
angles at each of the $n$ points.
They proved that this problem is NP-hard and gave a polynomial-time algorithm
with approximation ratio $O(\log n)$.
They also suggested another natural variant of the basic angle-TSP problem,
where the maximum turning angle at a vertex is bounded and the goal is
to minimize the length measure.

Here we study the computational complexity
of the following two variants of the angle-TSP problem.
The first variant naturally presents itself, however it does not
appear to have been previously studied.
The second variant is one of the two proposed by Aggarwal~\etal~\cite{ACKMS99}.
\begin{itemize} \itemsep -1pt
\item[(I)]
{\sc Min-Max-Turn Hamiltonian Tour}:
Given $n$ points in the plane,
compute a Hamiltonian tour that minimizes the maximum turning angle.
\item[(II)]
{\sc Bounded-Turn-Minimum-Length Hamiltonian Tour}:
Given $n$ points in the plane and an angle $\alpha \in [0,\pi]$,
compute a Hamiltonian tour with each turning angle at most $\alpha$,
if it exists, that has the minimum length.
\end{itemize}

We have the following two results for the two variants of angle-TSP:

\begin{theorem}\label{thm:hamiltonian}
{\sc Min-Max-Turn Hamiltonian Tour} is APX-hard.
\end{theorem}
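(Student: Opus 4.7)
The plan is to give a gap-preserving reduction from an APX-hard source problem, producing a constant-ratio inapproximability for the max-turn objective. A natural source would be a bounded-occurrence variant of 3-SAT (such as E3-SAT-5) or a restricted Hamiltonian cycle problem known to be APX-hard; alternatively, one could build on the APX-hardness of \textsc{Minimum-Link Covering Tour} established earlier in the paper. The goal is, given a source instance with optimum $\opt$, to construct a planar point set $P$ together with two constants $\alpha_1 < \alpha_2$ in $(0,\pi)$ such that, in the YES (low-$\opt$) case $P$ admits a Hamiltonian tour with every turning angle at most $\alpha_1$, while in the NO (high-$\opt$) case every Hamiltonian tour of $P$ has some turning angle at least $\alpha_2$. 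Because $\alpha_2/\alpha_1$ is an absolute constant strictly greater than $1$, a PTAS for \textsc{Min-Max-Turn Hamiltonian Tour} would decide the source problem, yielding APX-hardness.

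The construction would be gadget-based, with two kinds of pieces glued together. \emph{Edge gadgets} are long, nearly collinear chains of points forming narrow ``corridors''; the spacing is chosen so that the only way for a Hamiltonian tour to pick up all points of a corridor is to traverse it almost straight, \ie, with every intra-corridor turning angle at most $\alpha_1$ --- any detour or internal U-turn inside a corridor would force a turn exceeding $\alpha_2$. \emph{Vertex gadgets} are small clusters placed where several corridors meet; they are shaped so that the corridors incident at a vertex can be matched by the tour into pairs (entering along one corridor, leaving along another) with turning angle at most $\alpha_1$, and this pairing encodes which incident edges of the source instance are ``used''. Only certain pairings are geometrically compatible; any other pairing forces some turning angle at least $\alpha_2$.

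Correctness then has the standard two directions. In the forward direction I would take an optimal solution of the source instance and explicitly exhibit a Hamiltonian tour of $P$ that realizes the encoded pairings, verifying by elementary trigonometry that every turning angle is at most $\alpha_1$. In the reverse direction, any Hamiltonian tour of $P$ induces a choice of pairings at each vertex gadget; decoding these pairings recovers a feasible solution of the source instance, and if the source has no good solution then some vertex gadget must be paired incompatibly, producing a turning angle at least $\alpha_2$. The constant gap $\alpha_2/\alpha_1$ transfers the APX-hardness.

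I expect the main obstacle to lie in the design and analysis of the vertex gadget, where one needs a uniform, instance-independent angular gap. The essential difficulty is that the objective is a \emph{maximum} rather than a sum, so the gap must be produced \emph{locally} at a single turn rather than amortized over many turns (as one can afford in the min-sum-turn reductions of Aggarwal \etal). One must calibrate the gadget geometry so that every incompatible pairing is penalized by a turn whose angle is bounded below by $\alpha_2$ independent of the number of gadgets placed, while simultaneously keeping coordinates of polynomial bit length, preventing unintended collinearities among three or more distant gadgets, and ensuring that the tour cannot save turns by leaving a corridor through its side and re-entering it elsewhere. Once a gadget with this local gap property is in hand, the remainder of the argument reduces to routine case analysis, and the same construction immediately illuminates \textsc{Bounded-Turn-Minimum-Length Hamiltonian Tour}: feasibility at angle $\alpha_1$ is already NP-hard, so no algorithm can even decide when an input admits a bounded-turn Hamiltonian tour, let alone minimize its length.
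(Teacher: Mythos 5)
There is a genuine gap here: the entire proof hinges on a vertex gadget with a ``uniform, instance-independent angular gap,'' and you never construct it --- you correctly identify it as the main obstacle and then stop. For this particular objective the obstacle is not routine. Because the cost is a \emph{maximum} over all $n$ turning angles, a YES instance must admit a tour in which \emph{every} point has a small turn, which forces the whole point set to lie along a small number of nearly smooth arcs traversed nearly tangentially; a cluster where three or more corridors meet must be visited by several separate passes of the tour, the cluster's points must be partitioned among those passes, and one has to rule out every partition and every ordering that might sneak through with small turns --- this is exactly the case analysis you defer, and it is where a direct gadget reduction from SAT or Hamiltonicity is likely to founder. Without at least one concrete gadget and its two-sided angular analysis, the argument is a plan rather than a proof.

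The paper avoids local gadgets altogether. It reduces from \textsc{Covering Points by Lines} (proved APX-hard in Theorem~\ref{thm:min}) by embedding the point set $P$ near a single \emph{smooth, monotone} closed curve $\tilde\gamma$ built from the line arrangement $\L_P$, densely sampled with $\Theta(n^2 t)$ points so that consecutive samples subtend a tiny angle $\theta$. A tour of maximum turn roughly $k\theta$ must decompose into at most $k$ ``rounds'' along the curve, each round affording at most one crossing shortcut, and each shortcut can only pick up \emph{collinear} points of $P$; conversely $k$ covering lines yield a tour of maximum turn $(k+2)\theta$ (Lemmas~\ref{lem:hamiltonian-direct} and~\ref{lem:hamiltonian-reverse}). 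Thus the optimum max-turn value is proportional to the minimum line-cover number, and the constant-factor gap is inherited multiplicatively from the APX-hardness of line covering rather than manufactured at a single engineered turn; note in particular that the hard instances have optimum angle tending to $0$, not two fixed constants $\alpha_1<\alpha_2$. The paper also has to address numerical issues you mention only in passing (Lemma~\ref{lem:lattice} and the approximate versions of the two lemmas) to keep coordinates polynomial. If you want to salvage your outline, the most promising move is your own fallback --- reduce from the line-covering problem --- but then the ``gadget'' is the global curve, not a local vertex cluster.
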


\begin{theorem}\label{thm:hamiltonian2}
{\sc Bounded-Turn-Minimum-Length Hamiltonian Tour} is NP-hard.
\end{theorem}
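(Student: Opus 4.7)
The plan is to derive Theorem~\ref{thm:hamiltonian2} from Theorem~\ref{thm:hamiltonian}, using the fact that APX-hardness of {\sc Min-Max-Turn Hamiltonian Tour} implies, in particular, NP-hardness of the exact problem. The key structural observation is that for any $n$-point set $P$, the maximum turning angle of any Hamiltonian tour on $P$ has the form $\pi - \angle abc$ for some ordered triple of distinct points $a,b,c \in P$; hence the Min-Max-Turn optimum over $P$ lies in an explicitly computable set $A_P$ of at most $n(n-1)(n-2) = O(n^3)$ candidate angles, all enumerable in polynomial time.

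Given this, I would prove the contrapositive: a polynomial-time algorithm $B$ for {\sc Bounded-Turn-Minimum-Length Hamiltonian Tour} would yield a polynomial-time algorithm for {\sc Min-Max-Turn Hamiltonian Tour}, contradicting Theorem~\ref{thm:hamiltonian}. Given a Min-Max-Turn instance $P$, first sort $A_P$ as $\alpha_1 < \alpha_2 < \cdots < \alpha_N$ with $N = O(n^3)$. Since feasibility of the bounded-turn constraint is monotone in $\alpha$, a binary search on $A_P$ --- at each step invoking $B(P,\alpha_i)$ and checking whether it returns a tour or reports infeasibility --- locates the smallest $\alpha_{i^*} \in A_P$ for which a bounded-turn Hamiltonian tour exists. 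By the structural observation, $\alpha_{i^*}$ is exactly the Min-Max-Turn optimum, and $B(P,\alpha_{i^*})$ provides a witnessing tour (indeed, a shortest one). The whole procedure runs in polynomial time.

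The only subtlety to verify is that $B$ must be able to report infeasibility on instances admitting no valid tour. This is implicit in the problem statement (``compute a Hamiltonian tour \ldots\ if it exists''), since otherwise the problem is ill-posed on infeasible inputs; so I do not expect this to be a genuine obstacle. I note that this soft reduction yields only NP-hardness and not APX-hardness of the length measure for a fixed $\alpha$; obtaining the latter would require a dedicated, length-sensitive reduction rather than the contrapositive argument above, but Theorem~\ref{thm:hamiltonian2} does not claim such a stronger statement, and the connection to Theorem~\ref{thm:hamiltonian} is precisely the ``light shed'' on the Aggarwal~\etal\ problem that the abstract advertises.
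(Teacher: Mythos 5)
Your proof is correct, and it rests on the same essential insight as the paper's: the length constraint in \textsc{Bounded-Turn-Minimum-Length Hamiltonian Tour} can be neutralized so that only the angle constraint does any work, and hardness then transfers from \textsc{Min-Max-Turn Hamiltonian Tour}. The implementations differ, though. The paper argues entirely at the level of decision problems: deciding whether a tour with maximum turn at most $\alpha$ exists is \emph{literally a special case} of the bounded-turn-minimum-length decision problem once $L$ is set to a vacuously large value (say, $n$ times the diameter of the point set), so NP-hardness follows in one line from Theorem~\ref{thm:hamiltonian}. You instead give a Turing reduction from the optimization version of \textsc{Min-Max-Turn Hamiltonian Tour}, using the solver as a feasibility oracle and searching over the $O(n^3)$ candidate angles $\pi - \angle abc$; your structural observation that the optimum lies in this set, and the monotonicity of feasibility in $\alpha$, are both right, so the argument goes through. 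The paper's route buys brevity and sidesteps two issues your route must at least acknowledge: exact representation and comparison of the candidate angles (not entirely free, as the paper's own numerical caveats around Lemma~\ref{lem:lattice} indicate), and the convention that the algorithm reports infeasibility, which you correctly flag and which the decision-problem phrasing gets for free. Your route, in exchange, works directly with the search problem as stated and makes explicit how to recover the min-max-turn optimum itself, not just answer one threshold query. Both are valid proofs of the theorem.
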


\section{An $\Omega(\log n)$ lower bound on the approximation ratio of
  the greedy algorithm} 

In this section we prove Theorem~\ref{thm:greedy}.
Our construction is inspired by a construction of Brimkov~\etal~\cite{BLWM12}
for the related problem {\sc Covering Segments by Points},
which is in turn inspired by a classic lower bound construction for {\sc Vertex Cover}.
This construction shows that there exist graphs with $n$ vertices 
on which the greedy algorithm for {\sc Vertex Cover} achieves
a ratio of $\Omega(\log {n})$. 
\begin{figure}[htb]
\centering\includegraphics[scale=1.05]{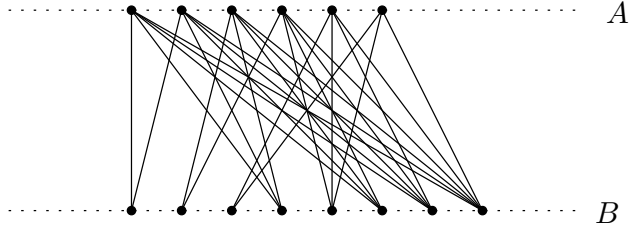}
\caption{A lower bound construction for {\sc Vertex Cover} and 
{\sc Covering Segments by Points} 
($k=6$ in this example).}
\label{fig:greedy1}
\end{figure}

Let $G_k = (A\cup B, E)$ be a bipartite graph
where $A$ is a set of $k$ vertices,
$B$ is a set of $\sum_{i=2}^k \lfloor \frac{k}{i} \rfloor$ vertices
partitioned into $k-1$ subsets $B_2, B_3, \ldots, B_k$,
and $E$ is a set of $\sum_{i=2}^k \lfloor \frac{k}{i} \rfloor \cdot i$ edges.
For $2 \le i \le k$,
each subset $B_i$ contains $\lfloor \frac{k}{i} \rfloor$ vertices
which are connected to $\lfloor \frac{k}{i} \rfloor \cdot i$ vertices in $A$,
with each vertex in $B_i$ connected to exactly $i$ distinct vertices in $A$.
Refer to Figure~\ref{fig:greedy1} for an illustration of the graph $G_k$ with
$k=6$.

Execute the greedy algorithm for {\sc Vertex Cover} on the bipartite graph $G_k$.
In each step of the algorithm,
after a vertex of the maximum degree is selected,
the vertex and its incident edges are removed from the graph.
The crucial observation here is that before each selection,
the degree of each vertex in $A$ is at most the number of subsets $B_i$
that are not empty,
while the degree of each vertex in a non-empty subset $B_i$ is exactly $i$.
Thus the vertex of maximum degree selected in each step
is always from a non-empty subset $B_i$ with the maximum index $i$.
A simple induction shows that the greedy algorithm always selects vertices
from $B_k, B_{k-1}, \ldots, B_2$, in this order, and stops when all vertices
in $B$ are selected.
On the other hand, the set of vertices in $A$ clearly covers all edges too.
Thus the approximation ratio of the greedy algorithm for Vertex Cover is
at least
\begin{align*}
\frac{|B|}{|A|}
= \frac{\sum_{i=2}^k \lfloor \frac{k}{i} \rfloor}{k}
&= \frac{(\sum_{i=1}^k \lfloor \frac{k}{i} \rfloor)- k}{k}
\ge \frac{(\sum_{i=1}^k \frac{k}{i})- 2 k}{k}\\
&= \sum_{i=1}^k \frac{1}{i} - 2
\ge \int_1^{k+1} \frac{1}{x} \mathrm{d}x - 2
= \ln(k+1) - 2,
\end{align*}
which is $\Omega(\log{n})$, where $n=\Theta(k \log{k})$ is the number
of vertices of $G_k$.

We now relate {\sc Vertex Cover} to a geometric problem, 
{\sc Covering Segments by Points}:
Given a set $\S$ of $n$ line segments in the plane, find a set $P$ of
points of minimum size such that each segment in $\S$ contains at least one point in $P$. 
To adapt the construction for {\sc Vertex Cover} to {\sc Covering Segments by Points},
Brimkov~\etal~\cite{BLWM12} place the vertices in $A$ and $B$ in two parallel
lines, with unit distance between consecutive vertices in each line,
and with the vertices in each subset $B_i$ placed consecutively,
as illustrated in Figure~\ref{fig:greedy1}.
Each edge in $G_k$ corresponds to
a line segment in $\S$ with the two vertices as the endpoints.
Without loss of generality,
each point in $P$ is either a vertex in $A$ or $B$
in one of the two parallel lines,
or the intersection of two or more segments in $\S$
between the two parallel lines.
Observe that during the execution of the greedy algorithm,
each intersection between (but in neither of) the two parallel lines
is incident to at most one segment from the subset of segments
incident to the vertices in $B_i$, $2 \le i \le k$;
similar to the vertices in $A$, these intersections are never selected by
the greedy algorithm.
Thus the greedy algorithm still selects the vertices in $B$
to cover the segments,
and its approximation ratio is still $\Omega(\log{n})$ by the same analysis.
\begin{figure}[htb]
\centering\includegraphics[scale=0.75]{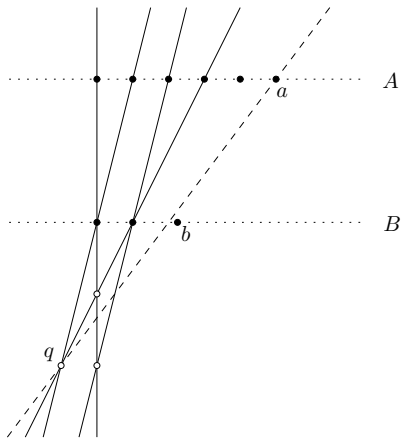}
\caption{Adapting the lower bound construction for {\sc Covering Segments by Points} 
to {\sc Covering Lines by Points}.}
\label{fig:greedy2}
\end{figure}

We next adapt this construction further to the problem
{\sc Covering Lines by Points}:
given a set $\L$ of $n$ lines in the plane,
find a set $P$ of points of the minimum cardinality such that
each line in $\L$ contains at least one point in $P$.
Since {\sc Covering Lines by Points} and 
our original problem {\sc Covering Points by Lines} 
are exact duals of each other, any lower bound we obtain for the
former is also a lower bound for the latter.

Refer to Figure~\ref{fig:greedy2}.
The straightforward part of the adaptation simply extends each segment
in $\S$ to a line in $\L$. This leads to more intersections, however,
above and below the two parallel lines.
As in the construction for {\sc Covering Segments by Points},
we place the vertices in $A$ evenly in the top line,
with unit distance between consecutive points.
For the vertices in $B$, however, we place them \emph{almost} evenly in
the bottom line, with \emph{near-unit} distance between consecutive
points (for convenience),
such that the following property is satisfied:
\begin{quote}
P1: Any intersection of the lines in $\L$,
if it is not a vertex in $A$ or $B$ in one of the two parallel lines,
it is incident to exactly two lines in $\L$.
\end{quote}

To ensure this property, we place the vertices in $B$ incrementally
as follows.
Let $B'$ be the subset of vertices in $B$ that have been placed,
and let $\L'$ be the subset of lines in $\L$ incident to $B'$.
Let $Q$ be the set of points that are intersections of the lines in $\L'$
but are not vertices in $A$ or $B$.
For each point $q$ in $Q$, and for each vertex $a$ in $A$,
mark the intersection of the bottom line and the line through $q$ and $a$.
Place the next vertex $b$ in $B$ in the bottom line to avoid such marks.

Due to the property P1, the greedy algorithm selects vertices in
$B_k,\ldots,B_3$ as before. Then, to cover the $2|B_2|$ lines incident to $B_2$,
it may select intersections not in the two parallel lines, but the number
of points it selects is at least $2|B_2|/2 = |B_2|$
since these lines are in general position.
Consequently, the same $\Omega(\log n)$ lower bound follows and this
completes the proof of Theorem~\ref{thm:greedy}.


\section{APX-hardness of {\sc Covering Points by Lines} and 
{\sc Maximum Point Coverage by Lines}}

In this section we prove
Theorems~\ref{thm:min} and~\ref{thm:max}.
Given a set $V$ of $n$ variables and a set $C$ of $m$ clauses,
where each variable has exactly $p$ literals (in $p$ different clauses)
and each clause is the disjunction of exactly $q$ literals
(of $q$ different variables),
{\sc E$p$-Occ-Max-E$q$-SAT} is the problem of finding an assignment of
the variables in $V$ that satisfies the maximum number of clauses in $C$.
Note that $pn = qm$.
Berman and Karpinski~\cite{BK99} showed that
even the simplest version of this problem,
{\sc E$3$-Occ-Max-E$2$-SAT}, is APX-hard,
and moreover this holds even if
the three literals of each variable
are neither all positive nor all negative;
see also~\cite{BK03} for the current best approximation lower bounds for
the many variants of {\sc E$p$-Occ-Max-E$q$-SAT} and related problems.
We prove that both {\sc Covering Points by Lines} and {\sc Maximum Point Coverage
by Lines} are APX-hard by two gap-preserving reductions from {\sc E$3$-Occ-Max-E$2$-SAT}
(Lemmas~\ref{lem:max} and~\ref{lem:min}, respectively). 

Let $(V,C)$ be an instance of {\sc E$3$-Occ-Max-E$2$-SAT},
where $V$ is a set of $n$ variables $v_i$, $1 \le i \le n$,
and $C$ is a set of $m$ clauses $c_j$, $1 \le j \le m$.
We construct a set $P$ of $4n + m$ points,
including four variable points $v_i^1,v_i^2,v_i^3,v_i^4$
for each variable $v_i$, $1 \le i \le n$,
and one clause point $c_j^*$ for each clause $c_j$, $1 \le j \le m$.
Assume that the three literals of each variable
are neither all positive nor all negative.
Then each variable has either two positive literals and one negative literal,
or two negative literals and one positive literal.
We place the point set $P$ in the plane 
(an example appears in Figure~\ref{fig:line})
such that no line goes through more than two points in $P$
except in the following two cases:
\begin{enumerate} \itemsep -1pt
\item
If a variable $v_i$ has two positive literals in $c_r$ and $c_s$,
respectively,
and has one negative literal in $c_t$,
then $c_r^*,v_i^1,v_i^2$ are collinear,
$c_s^*,v_i^3,v_i^4$ are collinear,
and $c_t^*,v_i^1,v_i^3$ are collinear.
\item
If a variable $v_i$ has two negative literals in $c_r$ and $c_s$,
respectively,
and has one positive literal in $c_t$,
then $c_r^*,v_i^1,v_i^3$ are collinear,
$c_s^*,v_i^2,v_i^4$ are collinear,
and $c_t^*,v_i^1,v_i^2$ are collinear.
\end{enumerate}

\begin{figure}[htb]
\centering\includegraphics[scale=0.85]{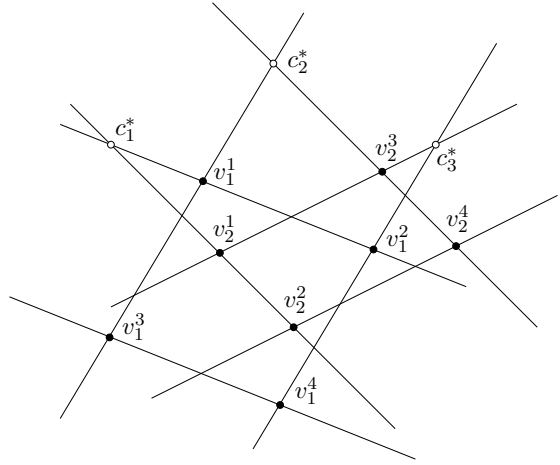}
\caption{%
The set $P$ of $4\times 2 + 3 = 11$ points
corresponding to the {\sc E$3$-Occ-Max-E$2$-SAT} instance $(V,C)$ where
$c_1 = v_1 \lor v_2$,
$c_2 = \bar v_1 \lor v_2$,
$c_3 = \bar v_1 \lor \bar v_2$.}
\label{fig:line}
\end{figure}

For any set $\L$ of lines,
let $\L_i \subseteq \L$ denote the subset of lines
incident to the four variable points $v_i^1,v_i^2,v_i^3,v_i^4$
of the variable $v_i$.
For each variable $v_i$,
and for each pair of indices $\{r,s\} \subset \{1,2,3,4\}$,
let $e_i^{r,s}$ denote the line through the two points $v_i^r$ and $v_i^s$.
We say that a set $\L$ of lines is \emph{canonical} if for each variable $v_i$,
$|\L_i| \le 2$ and
$\L_i \subseteq \{e_i^{1,2},e_i^{3,4},e_i^{1,3},e_i^{2,4}\}$,
and moreover,
if $|\L_i| = 2$, then $\L_i$ is
either $\{e_i^{1,2},e_i^{3,4}\}$
or $\{e_i^{1,3},e_i^{2,4}\}$.
The following lemma is used by both reductions:

\begin{lemma}\label{lem:canonical}
Any set $\L$ of $k$ lines that cover $x$ points in $P$
can be transformed into a canonical set of at most $k$ lines that
cover at least $x$ (possibly different) points in $P$. 
\end{lemma}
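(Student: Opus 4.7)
The plan is to prove Lemma~\ref{lem:canonical} by an exchange argument, locally modifying $\L$ one variable at a time. For each variable $v_i$, I would transform the subset $\L_i \subseteq \L$ of lines incident to $v_i$'s four variable points into a canonical subset, without decreasing the total number of points of $P$ covered by $\L$.

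The first key observation is structural. By the assumption that no line passes through more than two points of $P$ except in the three listed collinear triples per variable, no three of the four variable points of any single $v_i$ are collinear. Hence every line in $\L_i$ passes through at most two of $v_i^1, v_i^2, v_i^3, v_i^4$, and falls into one of three types: (a) a \emph{canonical} line $e_i^{r,s} \in \{e_i^{1,2}, e_i^{3,4}, e_i^{1,3}, e_i^{2,4}\}$, covering two variable points of $v_i$ and (for three of the four choices) additionally one clause point; (b) a \emph{diagonal} line $e_i^{1,4}$ or $e_i^{2,3}$, covering only two variable points of $v_i$ and no other point of $P$; or (c) a \emph{stray} line through exactly one variable point of $v_i$ and at most one further point of $P$ (a clause point, or a variable point of a different variable). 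Thus canonical lines are at least as point-rich as diagonals and stray lines on a per-line basis.

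The main step is a case analysis on $|\L_i|$, showing that $\L_i$ can be replaced by a canonical set $\L_i'$ with $|\L_i'| \le |\L_i|$ whose union covers at least as many points of $P$ as $\L_i$ does. When $|\L_i| \ge 3$, a single matching pair such as $\{e_i^{1,2}, e_i^{3,4}\}$ already covers all four variable points of $v_i$ and up to two clause points, leaving at least one line of budget free to absorb any external points uniquely covered by the removed lines. When $|\L_i| = 2$, the diagonal pair $\{e_i^{1,4}, e_i^{2,3}\}$ can be swapped outright for $\{e_i^{1,2}, e_i^{3,4}\}$, and any mixed pair involving a diagonal or a stray line is handled by replacing each offending line with the canonical line through the same $v_i^r$. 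When $|\L_i| \le 1$, a direct canonical swap suffices, using the ``possibly different points'' clause of the lemma to tolerate the exchange.

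The main obstacle I anticipate is the stray-line subcase when $|\L_i| \le 2$: a stray line $\ell$ through $v_i^r$ and some external point $q$ may be the unique line in $\L$ covering $q$, so swapping $\ell$ for a canonical line of $v_i$ seems to lose $q$. The resolution leans on the structural observation above: $\ell$ covers only two points of $P$ ($v_i^r$ and $q$), whereas the canonical substitute through $v_i^r$ covers two variable points of $v_i$ and possibly a clause point, so per-line coverage does not decrease. A careful global accounting, together with processing variables in an order consistent with stray lines that are shared between two variables (so that the loss of $v_j^s$ on one side is credited to the canonical pair chosen when $v_j$ is processed), shows that the total number of covered points in $P$ is non-decreasing throughout the transformation and that the final set is canonical.
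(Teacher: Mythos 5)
Your structural observations are correct and your local moves (merge stray lines into type-rich canonical lines, swap out diagonals, keep a spare line for orphaned external points) are essentially the same moves the paper makes. The gap is that the entire difficulty of the lemma lies in the ``careful global accounting'' that you defer to, and the two mechanisms you offer in its place do not work as stated. First, the per-line comparison (``canonical lines are at least as point-rich as stray lines'') does not control the \emph{total} coverage, because the extra points a canonical line picks up may already be covered by other lines while the external point $q$ lost from a stray line may be uniquely covered: \eg\ if $\L_i=\{e_i^{1,3},e_i^{2,4},\ell\}$ with $\ell$ stray through $v_i^1$ and a uniquely covered $q$, every point of the proposed substitute for $\ell$ is already covered, so the swap strictly loses a point unless the freed budget is spent on a new line through $q$ --- and that new line must itself avoid all variable points, or it re-enters some $\L_j$ and destroys canonicity there. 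Second, the ``processing order'' for stray lines shared between two variables is not shown to exist: the sharing relation can contain cycles, the replacements themselves can \emph{create} new stray lines between other variables (a spare line through two orphaned points that happen to be variable points of $v_j$ and $v_{j'}$ is a fresh stray line for both), and when $\L_j$ consists only of a shared stray line, removing it while processing $v_i$ leaves $v_j$ with no budget at all. None of these interactions is resolved by an ordering alone; one needs an explicit charging scheme.

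The paper supplies exactly this missing bookkeeping with one device: a map $f$ assigning each covered point to a single line covering it. All replacements are then analyzed in terms of \emph{mapped} points rather than covered points, with the invariants that every mapped point stays covered and the number of mapped points never decreases; since initially every covered point is mapped, the final covered count is at least $x$. This eliminates the double-counting problem (a point is credited to only one line), lets a type-$1$ line be replaced by $e_i^{r,s}$ where $v_i^s$ is chosen \emph{unmapped} (so the exchange of $q$ for $v_i^s$ is provably non-losing), and handles lines with no mapped variable points by simply rotating them off all variable points. Without this or an equivalent global potential, your argument is a plausible plan rather than a proof; I would not accept the $|\L_i|\le 2$ stray-line case or the shared-stray-line case as written.
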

\begin{proof}
Consider an arbitrary function $f:P\to\L$ that maps each point $p \in P$
to a line $\ell \in \L$ incident to $p$;
if $p$ is not incident to any line in $\L$,
then $p$ is unmapped, \ie, $f(p)$ is undefined.
For each variable $v_i$, let $\L_i^f \subseteq \L$ denote the subset of
lines to which $f$ maps the four variable points $v_i^1,v_i^2,v_i^3,v_i^4$.
Clearly, $|\L_i^f| \le 4$.
Note that $\L_i^f \subseteq \L_i$ but in general $\L_i^f$ is not necessarily
the same as $\L_i$, because a variable point of $v_i$ may be incident to
multiple lines in $\L$ and is mapped by $f$ to at most one of them.
In the following, we will transform $\L$ and update $f$ accordingly,
until $\L_i^f = \L_i$ for all variables $v_i$
and $\L$ is canonical.
Initially, every covered point is mapped to some line.
During the transformation, we maintain the invariant that
every mapped point is covered by some line
(but not necessarily every covered point is mapped to some line)
and the number of mapped points is non-decreasing.

Categorize each line $\ell \in \L$ in one of three types according to $f$:
if there are two variable points of the same variable mapped to $\ell$,
then $\ell$ is of type $2$;
if there are no variable points mapped to $\ell$,
then $\ell$ is of type $0$;
otherwise, $\ell$ is of type $1$.
Note that each type-$1$ line either has only one variable point mapped to it,
or has two variable points of different variables mapped to it.

In the first step,
we transform $\L$ until $|\L_i^f| \le 2$ for each variable $v_i$.
If $|\L_i^f| > 2$ for some variable $v_i$,
then $\L_i^f$ must include at least two lines of type $1$.
Note that each type-$1$ line in $\L_i^f$
has exactly one variable point of $v_i$ and at most one other point
(either some clause point or a variable point of some other variable)
mapped to it.
As long as $\L_i^f$ includes two type-$1$ lines,
we replace them by a line of type $2$
(through the two variable points of $v_i$
previously mapped to the two type-$1$ lines)
and at most one other line of type $0$ or $1$
(through the at most two other points, if any,
previously mapped to the two type-$1$ lines),
and then update the function $f$ accordingly
(so that the points previously mapped to the two type-$1$ lines
are mapped to the lines that replace them).
This replacement reduces $|\L_i^f|$ by $1$
(because the other line, if any, does not have any variable points of $v_i$
mapped to it), and does not increase $|\L_j^f|$ for any~$j \neq i$.
Repeat such replacement whenever applicable.
Eventually we have $|\L_i^f| \le 2$ for every variable $v_i$.

In the second step, 
we transform $\L$ until no lines of type $0$ are incident to variable
points. Consider any line $\ell$ of type $0$.
If $\ell$ is incident to two clause points,
then by construction it is not incident to any variable point.
Otherwise $\ell$ is incident to at most one clause point,
and hence can be rotated, if necessary, to avoid all variable points.
Note that the function $f$ and the subsets $\L_i^f$
are not changed during this step.

In the third step,
we transform $\L$ until it contains no lines of type $1$.
Consider any line $\ell$ of type $1$ in $\L_i^f$,
with a variable point of $v_i$, say $v_i^r$, mapped to it.
Since $|\L_i^f| \le 2$,
there is at most one other line besides $\ell$ in $\L_i^f$,
with at most two other variable points of $v_i$ mapped to it.
It follows that at least one of the four variable points of $v_i$,
say $v_i^s$, is not mapped to any line in $\L$.
Replace $\ell$ by the line $e_i^{r,s}$, and update the function $f$ accordingly
(first unmap the at most two points previously mapped to $\ell$,
including $v_i^r$,
then map both $v_i^r$ and $v_i^s$ to the line $e_i^{r,s}$ of type $2$).

In the four step,
we transform $\L$ by considering two cases for each line of type $2$:
\begin{enumerate} \itemsep -1pt

\item
A line $\ell$ of type $2$ is in $\L_i^f$ with $|\L_i^f| = 1$.
Then $\ell$ is the only line with variable points of $v_i$ mapped to it,
and
$\ell \in \{e_i^{1,2},e_i^{3,4},e_i^{1,3},e_i^{2,4},e_i^{1,4},e_i^{2,3}\}$.
If $\ell \in \{e_i^{1,4},e_i^{2,3}\}$,
we replace $\ell$ by any line in
$\{e_i^{1,2},e_i^{3,4},e_i^{1,3},e_i^{2,4}\}$.

\item
Two lines $\ell_1$ and $\ell_2$ of type $2$ are in $\L_i^f$ with $|\L_i^f| = 2$.
Then $\{\ell_1, \ell_2\}$ must be
either $\{e_i^{1,2},e_i^{3,4}\}$,
or $\{e_i^{1,3},e_i^{2,4}\}$,
or $\{e_i^{1,4},e_i^{2,3}\}$.
If the two lines are $\{e_i^{1,4},e_i^{2,3}\}$,
we replace them by
either $\{e_i^{1,2},e_i^{3,4}\}$
or $\{e_i^{1,3},e_i^{2,4}\}$, arbitrarily.
\end{enumerate}
After each replacement,
we update $f$ accordingly.
This completes the transformation of $\L$ into canonical form.
\end{proof}

For the reduction to {\sc Maximum Point Coverage by Lines},
we have the following lemma about the construction:

\begin{lemma}\label{lem:max-iff}
There exists an assignment of the variables in $V$
that satisfies at least $w$ clauses in $C$
if and only if
there exists a set of $2n$ lines
that cover at least $4n+w$ points in $P$.
\end{lemma}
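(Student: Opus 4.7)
The plan is to use the two ``nice pairs'' from Lemma~\ref{lem:canonical} to encode truth values. Inspecting the construction, for a variable $v_i$ with two literals of the majority polarity and one of the minority polarity, the pair $\{e_i^{1,2},e_i^{3,4}\}$ passes exactly through the clause points of the two majority-polarity literals, while the pair $\{e_i^{1,3},e_i^{2,4}\}$ passes exactly through the clause point of the minority-polarity literal. Both pairs cover all four variable points $v_i^1,\dots,v_i^4$, and the clause points covered by each pair are precisely those whose literal of $v_i$ is satisfied under the matching truth value.

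For the $(\Rightarrow)$ direction, given an assignment satisfying $w$ clauses, I would pick for each variable $v_i$ the nice pair that encodes its assigned value. The resulting $2n$ lines cover all $4n$ variable points, and a clause point $c_j^*$ is covered iff some chosen line passes through it iff some literal of $c_j$ is satisfied iff $c_j$ is satisfied. Hence exactly $w$ clause points are covered, for a total of $4n+w$.

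For the $(\Leftarrow)$ direction, given $\L$ with $|\L|=2n$ covering at least $4n+w$ points, apply Lemma~\ref{lem:canonical} to obtain a canonical $\L'$ with $|\L'|\le 2n$ of coverage at least $4n+w$. For each variable $v_i$, canonicality forces $\L'_i$ to lie inside exactly one of the two nice pairs (arbitrarily chosen when $\L'_i=\emptyset$), so one reads off an assignment for $v_i$ from which pair contains $\L'_i$.

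The main obstacle is to show this extracted assignment satisfies at least $w$ clauses. Let $a,b,c$ count the variables with $|\L'_i|=2,1,0$, and let $t$ be the number of remaining ``type-$0$'' lines (those not in any $\L'_i$, which after canonicalization are incident to no variable points). From $2a+b+t\le 2n=2(a+b+c)$, we get $t\le b+2c$; meanwhile the variable-point coverage is exactly $4a+2b$ in canonical form, so at least $4n+w-(4a+2b)=2b+4c+w$ distinct clause points must be covered. By the construction no three clause points are collinear, so each type-$0$ line covers at most two clause points, contributing at most $2t\le 2b+4c$ to the clause coverage. Thus at least $w$ distinct clause points must be hit by canonical lines in the $\L'_i$; and each such incidence certifies, via the nice-pair encoding, that the corresponding literal (and hence the corresponding clause) is satisfied by the extracted assignment. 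So at least $w$ clauses are satisfied, completing the reverse direction.
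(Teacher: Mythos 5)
Your proof is correct, and its forward direction coincides with the paper's. The reverse direction also begins the same way (canonicalize via Lemma~\ref{lem:canonical}, then read the assignment off the nice pairs), but you close it differently: the paper performs a second round of replacements, swapping each line that misses all variable points for a missing line of some nice pair (losing at most two clause points while gaining two variable points), until all $2n$ lines are incident to variable points, cover all $4n$ variable points, and hence cover at least $w$ clause points directly. You instead keep those type-$0$ lines and run a counting argument: with $a,b,c$ variables having $|\L'_i|=2,1,0$ and $t$ leftover lines, the bound $t\le b+2c$ combined with ``at most two clause points per type-$0$ line'' forces the nice-pair lines alone to hit at least $w$ clause points. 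Both arguments are sound; the paper's replacement yields a cleaner witness (every line ends up in some $\L_i$), while your count avoids having to verify that the extra replacements preserve coverage and canonicity. Two small slips that do not affect validity: in your preamble, the pair $\{e_i^{1,2},e_i^{3,4}\}$ covers the clause points of the \emph{positive} literals of $v_i$ (and $\{e_i^{1,3},e_i^{2,4}\}$ those of the negative literals), not of the majority-polarity literals---in the paper's Case~2 the majority polarity is negative, so the roles swap; your argument only uses the correct pair--truth-value--satisfied-literal correspondence, so this is harmless. And in the forward direction you should claim ``at least $w$'' rather than ``exactly $w$'' covered clause points, since the assignment may satisfy more than $w$ clauses.
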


\begin{proof}
We first prove the direct implication.
Let $g:V \to \{\textrm{true}, \textrm{false}\}$
be an assignment that satisfies at least $w$ clauses in $C$.
For each variable $v_i \in V$, $1 \le i \le n$,
select two lines:
if $g(v_i) = \textrm{true}$,
select the line through $v_i^1$ and $v_i^2$
and the line through $v_i^3$ and $v_i^4$;
if $g(v_i) = \textrm{false}$,
select the line through $v_i^1$ and $v_i^3$
and the line through $v_i^2$ and $v_i^4$.
By construction,
these $2n$ lines cover not only all $4n$ variable points,
but also the at least $w$ clause points for the satisfied clauses.

We next prove the reverse implication.
Let $\L$ be a set of $2n$ lines that cover at least $4n+w$ points in $P$.
We will construct an assignment of the variables in $V$
that satisfies at least $w$ clauses in $C$.
By Lemma~\ref{lem:canonical}, we can assume that $\L$ is canonical.
Consider any line $\ell \in \L$.
If $\ell$ is not incident to any variable point,
then it is incident to at most two clause points,
and can be replaced by a line through two variable points of some variable
while keeping $\L$ in canonical form.
Repeat such replacement whenever applicable.
Eventually $\L$ includes exactly $2n$ lines
incident to all $4n$ variable points
and at least $w$ clause points.
Compose an assignment
$g:V \to \{\textrm{true}, \textrm{false}\}$
by setting $g(v_i)$ to true if $\L_i = \{e_i^{1,2},e_i^{3,4}\}$
and to false if $\L_i = \{e_i^{1,3},e_i^{2,4}\}$.
Then by construction $g$ satisfies at least $w$ clauses.
\end{proof}

For the reduction to {\sc Covering Points by Lines},
we have the following two lemmas analogous to Lemma~\ref{lem:max-iff}
about the construction:

\begin{lemma}\label{lem:min-direct}
If there exists an assignment of the variables in $V$
that satisfies at least $w$ clauses in $C$,
then there exists a set of at most $2n + \lceil (m - w) / 2 \rceil$ lines
that cover all points in $P$.
\end{lemma}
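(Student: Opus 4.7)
My plan is to give a direct constructive proof, mirroring the ``direct implication'' argument of Lemma~\ref{lem:max-iff} and then handling the remaining uncovered clause points in pairs. Given an assignment $g:V\to\{\textrm{true},\textrm{false}\}$ that satisfies at least $w$ clauses, I would first build $2n$ ``variable lines'' exactly as in the forward direction of Lemma~\ref{lem:max-iff}: for each variable $v_i$, include the pair $\{e_i^{1,2},e_i^{3,4}\}$ if $g(v_i)=\textrm{true}$ and the pair $\{e_i^{1,3},e_i^{2,4}\}$ if $g(v_i)=\textrm{false}$.

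Next I would verify that these $2n$ lines cover all $4n$ variable points (immediate, since each pair is incident to all four variable points of its variable) and every clause point of a satisfied clause. This second claim follows from a short case analysis on the sign pattern of each variable. For instance, in case~1 of the construction (two positive literals in $c_r,c_s$ and one negative in $c_t$), the collinearities $c_r^*,v_i^1,v_i^2$; $c_s^*,v_i^3,v_i^4$; $c_t^*,v_i^1,v_i^3$ imply that $g(v_i)=\textrm{true}$ places $c_r^*$ on $e_i^{1,2}$ and $c_s^*$ on $e_i^{3,4}$, while $g(v_i)=\textrm{false}$ places $c_t^*$ on $e_i^{1,3}$. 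Case~2 is symmetric. In all cases, whenever a literal of $v_i$ is satisfied by $g$, the corresponding clause point lies on one of the two selected lines for $v_i$. Hence any clause satisfied by $g$ has its clause point covered.

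Finally, there are at most $m-w$ unsatisfied clauses, and therefore at most $m-w$ clause points not yet covered. I would cover them with $\lceil (m-w)/2\rceil$ additional lines by pairing them up arbitrarily (leaving at most one singleton, which can be covered by any line through it). This uses the trivial fact that every two points in the plane determine a line; we do not need any special collinearity. Adding these to the $2n$ variable lines yields a covering of all of $P$ by at most $2n+\lceil (m-w)/2\rceil$ lines, as claimed.

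The only nontrivial step is the coverage verification in the second paragraph, which is really a mechanical case analysis over the two sign patterns and the two possible truth assignments; there is no combinatorial obstacle beyond bookkeeping. The choice of the particular collinearities in the construction was made precisely to make this verification go through, so the lemma should fall out cleanly once the cases are enumerated.
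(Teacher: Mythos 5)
Your proposal is correct and follows essentially the same route as the paper's proof: select the two lines per variable according to the truth assignment, observe that they cover all variable points and all clause points of satisfied clauses, and mop up the at most $m-w$ remaining clause points with $\lceil (m-w)/2\rceil$ arbitrary pairing lines. The paper states the coverage of satisfied clause points as immediate ``by construction,'' whereas you spell out the case analysis, but the argument is the same.
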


\begin{proof}
Let $g:V \to \{\textrm{true}, \textrm{false}\}$
be an assignment that satisfies at least $w$ clauses in $C$.
For each variable $v_i \in V$, $1 \le i \le n$,
select two lines:
if $g(v_i) = \textrm{true}$,
select the line through $v_i^1$ and $v_i^2$
and the line through $v_i^3$ and $v_i^4$;
if $g(v_i) = \textrm{false}$,
select the line through $v_i^1$ and $v_i^3$
and the line through $v_i^2$ and $v_i^4$.
By construction,
these $2n$ lines cover not only all $4n$ variable points,
but also the at least $w$ clause points for the satisfied clauses.
To cover the remaining at most $m - w$ clause points
for the unsatisfied clauses,
we pair them up arbitrarily and use at most $\lceil (m - w) / 2 \rceil$
additional lines.
\end{proof}

\begin{lemma}\label{lem:min-reverse}
If there exists a set of at most $2n + \lfloor (m - w) / 2 \rfloor$ lines
that cover all points in $P$,
then there exists an assignment of the variables in $V$
that satisfies at least $w$ clauses in $C$.
\end{lemma}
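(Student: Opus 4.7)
The plan is to follow the template of the reverse direction of Lemma~\ref{lem:max-iff}, but exploit the tighter bound on the cover size to control how many clause points can be ``absorbed'' by lines not coming from the variable gadgets. First I would apply Lemma~\ref{lem:canonical} to the given cover of size $k \le 2n + \lfloor (m-w)/2 \rfloor$, producing a canonical set $\L'$ with $|\L'| \le k$ that still covers all of $P$; the ``possibly different'' caveat is vacuous here, since the transformed set covers at least $|P| = 4n + m$ points and there are only that many points to cover.

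The next key step is a structural observation. In the canonical set, the four variable points of each $v_i$ can only be covered by lines in $\L_i' \subseteq \{e_i^{1,2}, e_i^{3,4}, e_i^{1,3}, e_i^{2,4}\}$, and each such line meets exactly two of them. Combined with the canonical restriction $|\L_i'| \le 2$, this forces $|\L_i'| = 2$ with $\L_i' \in \{\{e_i^{1,2}, e_i^{3,4}\}, \{e_i^{1,3}, e_i^{2,4}\}\}$. These $2n$ variable-covering lines are pairwise distinct (no line of the construction contains four points of $P$), so at most $\lfloor (m-w)/2 \rfloor$ ``extra'' lines remain in $\L' \setminus \bigcup_i \L_i'$. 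By canonicity, each extra line is incident to no variable point, and since no three clause points of the construction are collinear (the only collinear triples involve two variable points of the same variable), each extra line covers at most two clause points. Hence extra lines together cover at most $2\lfloor (m-w)/2 \rfloor \le m - w$ clause points, so at least $w$ clause points must be covered by some $\L_i'$.

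Finally, I would define $g(v_i) = \textrm{true}$ if $\L_i' = \{e_i^{1,2}, e_i^{3,4}\}$ and $g(v_i) = \textrm{false}$ otherwise, then verify, using the two construction cases, that whenever a clause point $c_j^*$ lies on a line of $\L_i'$, the literal of $v_i$ appearing in $c_j$ is true under $g$, so $c_j$ is satisfied; this immediately yields at least $w$ satisfied clauses. The only conceptually delicate step is the structural claim that forces $|\L_i'| = 2$ together with the pairwise distinctness of the $2n$ variable lines, since this is what converts the cover-size budget into a sharp bound on the number of extra lines; the rest is a straightforward counting argument plus a routine case check of the collinearities spelled out in the construction.
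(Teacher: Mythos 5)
Your proposal is correct and follows essentially the same route as the paper's proof: canonicalize via Lemma~\ref{lem:canonical}, observe that full coverage forces $|\L_i|=2$ for every variable (yielding exactly $2n$ variable lines), bound the clause points covered by the remaining at most $\lfloor(m-w)/2\rfloor$ lines by $m-w$, and read off the assignment from the $\L_i$'s. You merely spell out a few details the paper leaves implicit (that the ``possibly different points'' caveat is vacuous when all of $P$ is covered, and that extra lines hit at most two clause points), which is fine.
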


\begin{proof}
Let $\L$ be a set of at most
$2n + \lfloor (m - w) / 2 \rfloor$ lines that cover all points in $P$.
We will construct an assignment of the variables in $V$
that satisfies at least $w$ clauses in $C$.
By Lemma~\ref{lem:canonical}, we can assume that $\L$ is canonical.
Since all points are covered, this requires that $|\L_i| = 2$ for each
$i=1,\ldots,n$, thus $\L$ includes exactly $2n$ lines incident to the
variable points. 
These $2n$ lines must cover at least $w$ clause points
because the other at most $\lfloor (m-w)/2 \rfloor$ lines in $\L$ can
cover at most $m - w$ clause points.
Compose an assignment
$g:V \to \{\textrm{true}, \textrm{false}\}$
by setting $g(v_i)$ to true if $\L_i = \{e_i^{1,2},e_i^{3,4}\}$
and to false if $\L_i = \{e_i^{1,3},e_i^{2,4}\}$.
Then  by construction $g$ satisfies at least $w$ clauses.
\end{proof}

The following lemma implies that {\sc Maximum Point Coverage By Lines} is APX-hard:

\begin{lemma}\label{lem:max}
For any $\eps$, $0 < \eps < \frac15$,
if {\sc Maximum Point Coverage by Lines} admits a polynomial-time approximation algorithm
with ratio $1 - \eps$,
then {\sc E$3$-Occ-Max-E$2$-SAT} admits a polynomial-time approximation algorithm
with ratio $1 - 5\,\eps$.
\end{lemma}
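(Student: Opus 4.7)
The plan is to translate a near-optimal line cover back to a near-optimal assignment via Lemma~\ref{lem:max-iff}, and then exploit the structural parameters of {\sc E$3$-Occ-Max-E$2$-SAT} to absorb the additive term $4n$ into a multiplicative slack.

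First, I would introduce $\opt_{\mathrm{SAT}}$ for the maximum number of clauses of $C$ simultaneously satisfiable, and $\opt_{\mathrm{MPC}}$ for the maximum number of points of $P$ coverable by $2n$ lines in the instance $(P,k=2n)$ produced by the construction. Combining the two directions of Lemma~\ref{lem:max-iff} gives the exact identity $\opt_{\mathrm{MPC}} = 4n + \opt_{\mathrm{SAT}}$.

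Second, I would feed $(P,2n)$ to the hypothesized polynomial-time $(1-\eps)$-approximation algorithm to obtain a set $\L$ of $2n$ lines covering at least $(1-\eps)(4n+\opt_{\mathrm{SAT}})$ points; writing this number as $4n + w$, we have $w \ge \opt_{\mathrm{SAT}} - \eps(4n+\opt_{\mathrm{SAT}})$. The reverse direction of Lemma~\ref{lem:max-iff} is algorithmic, since its proof invokes the constructive canonicalization of Lemma~\ref{lem:canonical}, so $\L$ can be turned in polynomial time into an assignment $g$ of $V$ satisfying at least $w$ clauses.

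Third, the heart of the reduction is a lower bound on $\opt_{\mathrm{SAT}}$. Since every variable has $p=3$ occurrences and every clause has $q=2$ literals, we have $m = 3n/2$. A uniform random assignment satisfies each $2$-clause with probability $3/4$, so by the probabilistic method $\opt_{\mathrm{SAT}} \ge 3m/4 = 9n/8 \ge n$; equivalently, $4n + \opt_{\mathrm{SAT}} \le 5\,\opt_{\mathrm{SAT}}$. Plugging this into the bound on $w$ yields $w \ge \opt_{\mathrm{SAT}} - 5\eps\,\opt_{\mathrm{SAT}} = (1-5\eps)\,\opt_{\mathrm{SAT}}$, so $g$ is the desired $(1-5\eps)$-approximate assignment; the restriction $\eps < \tfrac{1}{5}$ merely keeps this ratio positive.

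The only nontrivial step is pinning down the right lower bound $\opt_{\mathrm{SAT}} \ge n$: the whole gap preservation hinges on controlling the ratio $(4n+\opt_{\mathrm{SAT}})/\opt_{\mathrm{SAT}}$, and the constant $5$ in the statement is the quantitative consequence of this single estimate. Everything else (the identity $\opt_{\mathrm{MPC}} = 4n + \opt_{\mathrm{SAT}}$ and the polynomial-time extraction of $g$) is immediate from Lemmas~\ref{lem:canonical} and~\ref{lem:max-iff}.
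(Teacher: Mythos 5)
Your proposal is correct and follows essentially the same route as the paper's proof: both use the identity $\opt_{\mathrm{MPC}} = 4n + \opt_{\mathrm{SAT}}$ from Lemma~\ref{lem:max-iff}, the random-assignment bound $\opt_{\mathrm{SAT}} \ge \tfrac34 m$ together with $3n = 2m$ to get $\opt_{\mathrm{MPC}} \le 5\,\opt_{\mathrm{SAT}}$, and then transfer the additive error (the paper bounds $4n+m \le \tfrac{44}{9}w^*$ directly, while you bound $4n \le 4\,\opt_{\mathrm{SAT}}$; the two computations are interchangeable).
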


\begin{proof}
Let $(V,C)$ be an instance of {\sc E$3$-Occ-Max-E$2$-SAT} with $n$ variables
and $m$ clauses, where $3n = 2m$.
Consider the following algorithm:
first construct a set $P$ of points from $(V,C)$ 
(refer back to Figure~\ref{fig:line}) and set
$k$ (the number of lines) to $2n$; 
then run the $(1-\eps)$-approximation algorithm for
{\sc Maximum Point Coverage by Lines}
on the instance $(P,k)$ to obtain a set $\L$ of lines,
and finally compose an assignment
$g:V \to \{\textrm{true}, \textrm{false}\}$
as in the reverse implication of Lemma~\ref{lem:max-iff}.
The algorithm can clearly be implemented in polynomial time.
It remains to analyze its approximation ratio.

Let $x^*$ be the maximum number of points in $P$
that can be covered by any set of $k = 2n$ lines.
Clearly, $x^* \le 4n + m$.
Let $w^*$ be the maximum number of clauses in $C$ that can be satisfied by
any assignment of $V$.
Observe that $w^* \ge \frac34 m$
since a random assignment of each variable independently to either true
or false with equal probability $\frac12$ satisfies each disjunctive clause
of two literals with probability $1 - (\frac12)^2 = \frac34$.
Recall that $3n = 2m$.
Thus we have
\begin{equation}\label{eq:x*w*}
x^* \le 4n + m = \frac{11}3 m \le \frac{11}3\,\frac43 w^* = \frac{44}9 w^*
< 5\,w^*.
\end{equation}

Let $x'$ be the number of points in $P$ covered by the lines in $\L$.
Let $w'$ be the number of clauses in $C$ satisfied by the assignment $g$.
Lemma~\ref{lem:max-iff} implies that
$w^* = x^* - 4n$
and by the reverse implication in Lemma~\ref{lem:max-iff} we have 
$w' \ge x' - 4n$.
Thus
$w^* - w' \le x^* - x'$.
It then follows from~\eqref{eq:x*w*} that
$$
\frac{w^* - w'}{w^*}
\le 5\,\frac{x^* - x'}{x^*}.
$$
The $(1-\eps)$-approximation algorithm for {\sc Maximum Point Coverage by Lines}
guarantees the relative error bound
$(x^* - x')/x^* \le \eps$.
So we have
$(w^* - w')/w^* \le 5\,\eps$
and hence $w'/w^* \ge 1 - 5\,\eps$, as desired.
\end{proof}

The following lemma, analogous to Lemma~\ref{lem:max},
implies that {\sc Covering Points by Lines} is APX-hard:

\begin{lemma}\label{lem:min}
For any $\eps$, $0 < \eps < \frac1{10}$,
if {\sc Covering Points by Lines} admits a polynomial-time approximation algorithm
with ratio $1 + \eps$,
then {\sc E$3$-Occ-Max-E$2$-SAT} admits a polynomial-time approximation algorithm
with ratio $1 - 10\,\eps$.
\end{lemma}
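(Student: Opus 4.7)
The plan is to mirror the proof of Lemma~\ref{lem:max}, using the matched pair of one-sided bounds in Lemmas~\ref{lem:min-direct} and~\ref{lem:min-reverse} in place of the equivalence Lemma~\ref{lem:max-iff}. Given an instance $(V,C)$ of {\sc E$3$-Occ-Max-E$2$-SAT} with $n$ variables and $m$ clauses (so $3n=2m$), I would build the same point set $P$ as in the proof of Lemma~\ref{lem:max}, run the hypothesized $(1+\eps)$-approximation for {\sc Covering Points by Lines} on $P$ to obtain a cover $\L$ of size $k'$, apply Lemma~\ref{lem:canonical} to canonicalize $\L$ without increasing its size, and finally extract an assignment $g$ from $\L$ in the manner of the proof of Lemma~\ref{lem:min-reverse}, satisfying some number $w'$ of clauses.

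Next I would connect $w^*-w'$ to $k'-k^*$ by two elementary steps. Lemma~\ref{lem:min-direct} applied to an optimal assignment gives $k^* \le 2n + \lceil(m-w^*)/2\rceil$, hence $w^* \le m + 4n - 2k^* + 1$. In the reverse direction, the largest $w$ for which the hypothesis of Lemma~\ref{lem:min-reverse} is satisfied by the cover $\L$ of size $k'$ is $w = m + 4n - 2k'$ (this value and $m$ have the same parity, so the floor is attained), yielding $w' \ge m + 4n - 2k'$. Subtracting and using the approximation guarantee $k' \le (1+\eps)k^*$ gives $w^* - w' \le 2(k'-k^*) + 1 \le 2\eps k^* + 1$.

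Finally I would convert this additive bound into the desired relative bound. Two crude estimates suffice: $k^* \le 2n + \lceil m/2 \rceil = \tfrac{11m}{6} + O(1)$, using $3n=2m$; and the standard lower bound $w^* \ge \tfrac{3m}{4}$, from the expected value of a uniformly random truth assignment on two-literal clauses. Combining these yields $k^* \le \tfrac{22}{9} w^* + O(1)$, so $w^* - w' \le \tfrac{44}{9}\eps w^* + O(1)$, which is at most $10\eps w^*$ once $w^*$ exceeds a threshold of order $1/\eps$; smaller instances have $n,m = O(1/\eps)$ and can be solved exactly in polynomial time for any fixed $\eps$. Hence $w'/w^* \ge 1 - 10\eps$, as claimed. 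I expect the only real difficulty to be the bookkeeping of the floor, ceiling, and additive constants; the comfortable gap between $\tfrac{44}{9} \approx 4.89$ and the target $10$ provides ample slack to absorb them.
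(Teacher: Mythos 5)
Your proposal is correct and follows essentially the same route as the paper's proof: apply Lemma~\ref{lem:min-direct} to the optimum and Lemma~\ref{lem:min-reverse} to the approximate cover to get $w^*-w'\le 2(k'-k^*)+O(1)$, then normalize via $k^*=O(w^*)$ and dispose of the additive constant by assuming $w^*=\Omega(1/\eps)$ (brute-forcing small instances). Your intermediate constants are marginally sharper (e.g., $w'\ge m+4n-2k'$ versus the paper's $m+4n-2k'-1$, and $k^*\le \frac{22}{9}w^*+O(1)$ versus $k^*\le 4w^*$), but the argument is the same and the slack comfortably yields the stated ratio $1-10\eps$.
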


\begin{proof}
Let $(V,C)$ be an instance of {\sc E$3$-Occ-Max-E$2$-SAT} with $n$ variables
and $m$ clauses, where $3n = 2m$.
Let $w^*$ be the maximum number of clauses in $C$ that can be satisfied by
any assignment of $V$.
We have $w^* \ge \frac34 m$
since a random assignment of each variable independently to either true
or false with equal probability $\frac12$ satisfies each disjunctive clause
of two literals with probability $1 - (\frac12)^2 = \frac34$.
Without loss of generality, we assume that $w^* \ge 1/\eps$,
since otherwise the instance $(V,C)$ would have size $O(1/\eps)$
and would admit a straightforward brute-force algorithm
running in $2^{O(1/\eps)}$ time,
which is constant time for any fixed $\eps > 0$.

Under the assumption that $w^* \ge 1/\eps$,
we have the following algorithm for {\sc E$3$-Occ-Max-E$2$-SAT}:
first construct a set $P$ of points from $(V,C)$
(refer back to Figure~\ref{fig:line}),
then run the $(1+\eps)$-approximation algorithm for {\sc Covering Points by Lines} on $P$
to obtain a set $\L$ of lines,
and finally compose an assignment
$g:V \to \{\textrm{true}, \textrm{false}\}$
as in the proof of Lemma~\ref{lem:min-reverse}.
The algorithm can clearly be implemented in polynomial time.
It remains to analyze its approximation ratio.

Let $k^*$ be the minimum cardinality of any set of lines that cover $P$.
It is easy to see that $k^* \le 2n + m$.
Recall that $w^* \ge \frac34m$ and $3n = 2m$.
Thus we have
\begin{equation}\label{eq:k*w*}
k^* \le 2n + m \le 3m \le 4 w^*.
\end{equation}
Lemma~\ref{lem:min-direct} implies that
$k^* \le 2n + \lceil (m-w^*)/2 \rceil$.
It follows that
$k^* \le 2n + (m-w^*+1)/2$
and hence
\begin{equation}\label{eq:w*k*}
w^* \le 2(2n - k^*) + m + 1.
\end{equation}

Let $k'$ be the number of lines in $\L$.
Let $w'$ be the number of clauses in $C$ that are satisfied by the
assignment $g$.
Put $w := m - 1 - 2(k' - 2n)$.
Then
\begin{equation}\label{eq:k'w}
k' =  2n + (m-1-w)/2 = 2n + \lfloor (m-w)/2 \rfloor.
\end{equation}
Note that
$w := m - 1 - 2(k' - 2n)$ is the smallest integer (there are two such integers)
satisfying the equation
$k' =  2n + \lfloor (m-w)/2 \rfloor$.
If $w' < w$, then by (the contrapositive of) Lemma~\ref{lem:min-reverse}
we would have
$k' > 2n + \lfloor (m-w)/2 \rfloor$,
which contradicts~\eqref{eq:k'w}.
So we must have $w' \ge w$, that is,
\begin{equation}\label{eq:w'k'}
w' \ge m - 1 - 2(k' - 2n).
\end{equation}
From \eqref{eq:w*k*} and \eqref{eq:w'k'}, we have
$$
w^* - w'
\le 2(2n - k^*) + m + 1 - m + 1 + 2(k' - 2n)
= 2(k' - k^*) + 2,
$$
and hence
$$
\frac{w^* - w'}{w^*}
\le \frac{2(k' - k^*)}{w^*} + \frac2{w^*}
\le \frac{8(k' - k^*)}{k^*} + \frac2{w^*},
$$
where the second inequality follows from~\eqref{eq:k*w*}.
The $(1+\eps)$-approximation algorithm for {\sc Covering Points by Lines}
guarantees the relative error bound
$(k' - k^*)/k^* \le \eps$.
Recall our assumption that
$w^* \ge 1/\eps$.
Consequently we have
$(w^* - w')/w^* \le 8\,\eps + 2\,\eps = 10\,\eps$
and hence $w'/w^* \ge 1 - 10\,\eps$, as desired.
\end{proof}

\paragraph{Remark.} For simplicity, we did not attempt obtaining
the best multiplicative constant factors of $\eps$ in the previous two
lemmas. Those expressions can be improved.

\section{APX-hardness of {\sc Minimum-Link Covering Tour}}

In this section we prove Theorem~\ref{thm:covering}.
We show that {\sc Minimum-Link Covering Tour} is APX-hard
by a gap-preserving reduction from {\sc Covering Points by Lines}\footnote{%
Arkin~\etal~\cite{AMP03} proved the NP-hardness of
{\sc Minimum-Link Covering Tour} by a reduction from the same problem
{\sc Covering Points by Lines}, 
but since their reduction is not gap-preserving,
their proof does not immediately imply the APX-hardness of
{\sc Minimum-Link Covering Tour} even if {\sc Covering Points by
  Lines} was known to be APX-hard. 
It is quite likely, however, that their construction can be combined with
our construction in the proof of Theorem~\ref{thm:min}
to obtain a gap-preserving reduction
directly from {\sc E$3$-Occ-Max-E$2$-SAT} to {\sc Minimum-Link Covering Tour}.},
which was proved to be APX-hard in Theorem~\ref{thm:min}.

Let $P$ be a set of $n$ points for the problem {\sc Covering Points by Lines}.
We will construct a set $Q$ of $3n$ points for the problem
{\sc Minimum-Link Covering Tour},
such that $P$ can be covered by $k$ lines if and only if
$Q$ admits a covering tour with $3k$ segments.

\begin{figure}[htb]
\centering\includegraphics[scale=0.9]{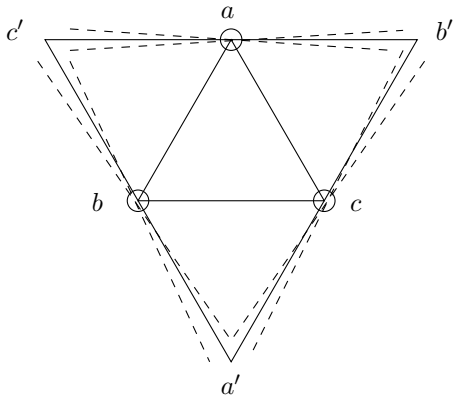}
\caption{%
The construction of $Q$ from $P$.}
\label{fig:triangle}
\end{figure}

Refer to Figure~\ref{fig:triangle}.
By an affine transformation, we first transform $P$
into a set $P'$ of $n$ points such that
(i) $P'$ is enclosed in a circle of some small radius $r$, say, $r = 1/100$;
(ii) the angle between any two lines $\ell_1$ and $\ell_2$,
each incident to at least two points in $P'$,
is at most some small angle $\theta$, say, $1^\circ$.
Now take an equilateral triangle $abc$ of side length $1$ inscribed
in an equilateral triangle $a'b'c'$ of side length $2$,
where the three vertices of the smaller triangle are the midpoints of
the three edges of the larger triangle.
The point set $Q$ is the union of three rotated copies of $P'$ that we refer
to as the three clusters,
one cluster near each vertex of $abc$,
such that the circle of radius $r$ enclosing each cluster is centered
at the vertex,
and all lines through at least two points in the cluster are at angles at most
$\theta$ from the edge of $a'b'c'$ that contains the vertex.

\begin{lemma}
There exists a set of $k$ lines that cover all points in $P$
if and only if
there exists a covering tour with $3k$ segments for $Q$.
\end{lemma}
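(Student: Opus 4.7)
The plan is to prove the two directions of the equivalence separately.

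For the forward direction, suppose $\ell_1,\dots,\ell_k$ cover $P$. After the affine map and the three rotations, each $\ell_j$ induces three lines $\ell_j^a,\ell_j^b,\ell_j^c$ with $\ell_j^x$ within angle $\theta$ of the edge of $a'b'c'$ incident to vertex $x$, and together the $3k$ lines cover $Q$. I would assemble a single closed chain of $3k$ segments by taking, for indices mod $k$,
\[
A_j := \ell_{j-1}^b \cap \ell_j^c,\qquad B_j := \ell_j^c \cap \ell_j^a,\qquad C_j := \ell_j^a \cap \ell_j^b,
\]
which lie near $a'$, $b'$, $c'$ respectively because the two lines defining each vertex are nearly parallel to two edges of $a'b'c'$ meeting at $60^\circ$. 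The closed tour $A_1B_1C_1A_2\dots A_kB_kC_k$ has $3k$ segments, and segment $A_jB_j$ lies on $\ell_j^c$, runs between endpoints near $a'$ and $b'$, and so passes through cluster $c$ (at the midpoint of $a'b'$), covering every point of cluster $c$ on $\ell_j^c$. The symmetric statements hold for $B_jC_j$ on $\ell_j^a$ covering cluster $a$ and $C_jA_{j+1}$ on $\ell_j^b$ covering cluster $b$, so the $3k$ segments cover $Q$.

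For the reverse direction, given any tour with $3k$ segments covering $Q$, I would extend each segment to a full line, producing at most $3k$ lines that still cover $Q$. I classify these lines by direction into four groups: for $x\in\{a,b,c\}$, let $H_x$ consist of the lines whose direction is within $\theta$ of the edge of $a'b'c'$ incident to $x$, and let $H_0$ contain the rest. The three sets $H_a,H_b,H_c$ are pairwise disjoint because the three edges differ in direction by $60^\circ$. The key geometric facts I would use are: (i) by property~(ii) of the construction, any line covering at least two points of cluster $x$ must lie in $H_x$; (ii) a line in $H_x$ stays within $O(\theta+r)$ of its edge, and since the other two clusters are at distance $\sqrt{3}/2$ from that edge, it covers no points outside cluster $x$; (iii) a line in $H_0$ covers at most one point per cluster. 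Hence for each cluster $x$, all but at most $|H_0|$ of its $n$ points are covered by lines in $H_x$; rotating these $|H_x|$ lines back by $R_x^{-1}$ and adding at most $|H_0|$ singleton lines for the missed points gives a cover of $P$ with $|H_x|+|H_0|$ lines. Taking the minimum over $x$ and using $|H_a|+|H_b|+|H_c|+|H_0|\le 3k$ yields $k^*(P)\le \min_x |H_x| + |H_0| \le k+\tfrac{2}{3}|H_0|$.

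The remaining step, and the main obstacle, is to eliminate the $|H_0|$ term. I would close the gap by a local replacement argument showing that without loss of generality $|H_0|=0$: a low-yield segment covers at most three points of $Q$, one per cluster, and because each cluster has tiny radius $r=1/100$ while the inter-cluster distance is $\Theta(1)$, the segment's direction is strongly constrained by which cluster points it covers. Aligning such a segment with the corresponding edge and locally redirecting the tour through the appropriate vertex of $a'b'c'$ covers the same points with the same number of segments, reducing to the case $|H_0|=0$ in which the pigeonhole above gives exactly $k^*(P)\le k$.
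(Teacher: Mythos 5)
Your forward direction is correct and is essentially the paper's argument (the paper only sketches the linking of the $3k$ lines into a tour with turns near $a'$, $b'$, $c'$; your explicit intersection points $A_j,B_j,C_j$ make this concrete). Your reverse direction also starts in the same spirit as the paper's, which classifies the supporting lines into \emph{intra-cluster} and \emph{inter-cluster} lines, but it has a genuine gap exactly where you flag ``the main obstacle.'' The proposed local replacement eliminating $H_0$ does not work: a line supporting a low-yield segment can cover one point in each of \emph{two different} clusters (by the geometry it cannot meet all three clusters, but two is possible), and no single intra-cluster segment can cover both of those points. So ``covers the same points with the same number of segments'' is unachievable segment-by-segment, and without it you are stuck with the bound $k^*(P)\le k+\tfrac23|H_0|$, which does not prove the stated equivalence (and $|H_0|$ has no a priori constant bound). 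A secondary, fixable issue: defining $H_x$ purely by direction makes claim (ii) false --- a line whose direction is within $\theta$ of the edge through cluster $x$ need not pass near that edge and may cover single points of other clusters; you should instead classify a line into $H_x$ only if it covers at least two points of cluster $x$.

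The paper closes the gap globally rather than locally. Let $l_2$ be the number of inter-cluster lines and $n_a,n_b,n_c$ the numbers of points of the three clusters they cover, so $n_a+n_b+n_c=2l_2$. Re-cover these points by pairing them up \emph{within} each cluster: this costs $\lceil n_a/2\rceil+\lceil n_b/2\rceil+\lceil n_c/2\rceil$ intra-cluster lines, and since $n_a+n_b+n_c$ is even, at most two of the three ceilings round up, giving at most $l_2+1$ lines in total. Hence $Q$ is covered by at most $l_1+l_2+1\le 3k+1$ intra-cluster lines, and by pigeonhole some cluster uses at most $\lfloor(3k+1)/3\rfloor=k$ of them; undoing the affine map gives $k$ lines covering $P$. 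The key point you are missing is that the cost of converting inter-cluster coverage to intra-cluster coverage must be amortized over all three clusters at once (paying roughly $l_2$ lines total, with a $+1$ that the floor absorbs), not charged to each cluster separately before taking the minimum.
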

\begin{proof}
We first prove the direct implication.
Let $\L$ be a set of $k$ lines that cover all points in $P$.
Then by the affine transformation,
we have a set $\L'$ of $k$ lines that cover all points in $P'$,
and the $3k$ lines in the three copies of $\L'$
corresponding to the three copies of $P'$ cover all points in $Q$.
These $3k$ lines can obviously be linked into a covering tour
with $3k$ segments,
where any three consecutive segments are from three different clusters,
and the turns between consecutive segments are near the vertices of $a'b'c'$.

We next prove the reverse implication.
Let $\T$ be a covering tour with $3k$ segments for $Q$,
and let $\L_\T$ be the set of at most $3k$ lines
supporting the $3k$ segments in $\T$.
A line is an \emph{intra-cluster} line if the points in $Q$ that are covered
by it, if any, are all from the same cluster;
it is an \emph{inter-cluster} line otherwise.
By construction,
each inter-cluster line covers exactly two points, from two different clusters.
Let $l_1$ (respectively, $l_2$) be the number of
intra-cluster (respectively, inter-cluster) lines in $\L_\T$;
then $l_1 + l_2 \le 3k$.
Let $n_a,n_b,n_c$ be the numbers of points in clusters near $a,b,c$,
respectively, that are covered by the inter-cluster lines;
then $n_a + n_b + n_c = 2 l_2$.
Since any two points in the same cluster can be covered by some intra-cluster
line, these $n_a + n_b + n_c$ points can be covered by at most
$\lceil n_a/2 \rceil + \lceil n_b/2 \rceil + \lceil n_c/2 \rceil$
intra-cluster lines (instead of $l_2$ inter-cluster lines).
Since the sum of the three numbers $n_a,n_b,n_c$ is even,
we have either two of them odd and one even, or all three of them even.
Thus we have
$\lceil n_a/2 \rceil + \lceil n_b/2 \rceil + \lceil n_c/2 \rceil
\le (n_a + n_b + n_c) /2 + 1 = l_2 + 1$.
It follows that $Q$ can be covered by at most $l_1 + (l_2 + 1) \le 3k + 1$
intra-cluster lines, and hence at least one of the three copies of $P'$
can be covered by at most $\lfloor (3k+1)/3 \rfloor = k$ lines.
The corresponding $k$ lines obtained by reversing the affine transformation
cover all points in $P$.
\end{proof}

From the above lemma, we can easily prove the following lemma similar to
Lemmas~\ref{lem:max} and~\ref{lem:min},
which implies the APX-hardness of {\sc Minimum-Link Covering Tour}:

\begin{lemma}\label{lem:covering}
For any $\rho \ge 1$,
if {\sc Minimum-Link Covering Tour} admits a polynomial-time approximation algorithm
with ratio $\rho$,
then {\sc Covering Points by Lines} admits a polynomial-time approximation algorithm
with ratio $\rho$.
\end{lemma}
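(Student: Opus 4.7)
The plan is to use the reduction $P \mapsto Q$ from the preceding lemma as a (nearly) ratio-preserving black box. Let $k^\ast$ denote the optimum of CPL on $P$ and $\ell^\ast$ the optimum of MLCT on $Q$. The ``only if'' direction of the preceding lemma, applied with $k = k^\ast$, exhibits a covering tour of $Q$ with exactly $3k^\ast$ segments, giving $\ell^\ast \le 3k^\ast$.

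For the backward direction, I would inspect the proof of the preceding lemma and observe that its argument in fact processes any covering tour segment by segment. For a covering tour of $Q$ with $m$ segments, letting $l_1$ and $l_2$ denote the numbers of intra- and inter-cluster supporting lines gives $l_1 + l_2 \le m$, and the same parity trick on $n_a + n_b + n_c$ produces an intra-cluster cover of $Q$ by at most $l_1 + l_2 + 1 \le m + 1$ lines, and hence (by pigeonhole across the three clusters) a line cover of one copy of $P$ by at most $\lfloor (m+1)/3 \rfloor$ lines. Moreover, this extraction is constructive and runs in polynomial time.

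Putting the pieces together, the proof reads as follows. Given $P$, construct $Q$, run the purported $\rho$-approximation algorithm for MLCT to obtain a tour with $m \le \rho\,\ell^\ast \le 3\rho k^\ast$ segments, and decode a line cover of $P$ of size at most $\lfloor (m+1)/3 \rfloor \le \rho k^\ast + \tfrac{1}{3}$; since this number is an integer, it is at most $\lceil \rho k^\ast \rceil$, yielding a $\rho$-approximation for CPL (up to the familiar harmless additive slack, which is absorbed exactly as in the proof of Lemma~\ref{lem:min} by dispatching instances with $k^\ast$ below a fixed constant by brute force). The only obstacle worth naming is verifying that the decoding step of the preceding lemma does not secretly require the tour length to be a multiple of three, which, as indicated above, it does not.
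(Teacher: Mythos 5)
Your proof is correct and follows exactly the route the paper intends (the paper in fact omits the proof of this lemma, asserting only that it follows easily from the preceding one); in particular you correctly isolate and resolve the single point needing care, namely that the reverse direction of the preceding lemma must be applied to a tour whose segment count $m$ need not be a multiple of three, and that its argument still yields an intra-cluster cover of size at most $m+1$ and hence, by pigeonhole, a cover of one copy of $P$ by at most $\lfloor (m+1)/3 \rfloor$ lines. The residual additive slack (the decoded cover can have size up to $\rho k^\ast + \tfrac13$, so the ratio is really $\rho + O(1/k^\ast)$ rather than exactly $\rho$) is handled by your brute-force dispatch of instances with small $k^\ast$, which is the standard fix and is immaterial to the APX-hardness conclusion.
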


\section{NP-hardness of {\sc Minimum-Link Spanning Tour}}

In this section we prove Theorem~\ref{thm:spanning}.
We show that {\sc Minimum-Link Spanning Tour} is NP-hard by a reduction from
a variant of the NP-hard problem
{\sc Hamiltonian Circuit in Cubic Graphs}~\cite{GJS74},
in which the input consists of not only a cubic graph $G$
but also some edge $e$ of $G$ that is required to be part of the Hamiltonian
circuit.
A simple Turing reduction shows that this variant is still NP-hard.

Let $G$ be a cubic graph with $n$ vertices and $m$ edges, where $3n = 2m$.
Let $e = \{s,t\}$ be the edge of $G$ that is required to be part of the circuit.
We first obtain a graph $G'$ from $G$ by removing the edge $e$
then adding two dummy vertices $s'$ and $t'$ with two new edges
$\{s, s'\}$ and $\{t, t'\}$.
Then there exists a Hamiltonian circuit in $G$ containing the edge $e$
if and only if
there exists a Hamiltonian path in $G'$ from $s'$ to $t'$.
Observe that $G'$ has exactly $n+2$ vertices and exactly $m+1$ edges,
where $n+2 < m+1$,
and moreover every vertex except $s'$ and $t'$ has degree $3$.

We next construct a set $P$ of $n+2+m+1$ points,
one vertex point for each vertex,
and one edge point for each edge in $G'$.
The $n+2$ vertex points are in some arbitrary convex position,
say, on a circle.
The $m+1$ edge points are in the interior of the convex hull
of the $n+2$ vertex points.
Moreover, for each edge $e = \{ u, v \}$,
the edge point of $e$ is in the line through the two vertex points
of $u$ and $v$ (\ie, in the interior of the segment $uv$),
and is not in any other line containing more than two points in $P$.

The reduction can clearly be implemented in polynomial time.
Then the following lemma establishes the NP-hardness
of {\sc Minimum-Link Spanning Tour}:

\begin{lemma}
There exists a Hamiltonian path in $G'$ from $s'$ to $t'$
if and only if
there exists a spanning tour with $m+2$ segments for $P$.
\end{lemma}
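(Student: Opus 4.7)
For the forward direction, given a Hamiltonian $s'$-$t'$ path
$s'=u_0,u_1,\dots,u_{n+1}=t'$ in $G'$, I would take the $n+1$ ``backbone''
segments $u_iu_{i+1}$, each of which, by construction of $P$, joins two
vertex points and automatically passes through the edge point of
$\{u_i,u_{i+1}\}$; the backbone thus covers all $n+2$ vertex points together
with $n+1$ of the edge points. To close the chain from $t'$ back to $s'$ and
pick up the $m-n$ remaining edge points, I would thread them in arbitrary
order using $m-n+1$ additional segments whose endpoints lie in $P$, giving a
closed chain of $(n+1)+(m-n+1)=m+2$ segments.

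For the reverse direction, suppose a spanning tour $T$ with $m+2$ segments
exists. Classify each segment by the types of its two $P$-endpoints:
\emph{vertex-vertex-edge} (VVe, joining two vertex points adjacent in $G'$,
so the segment automatically covers the associated edge point),
\emph{vertex-vertex-nonedge} (VVn), \emph{vertex-edge} (VE), and
\emph{edge-edge} (EE); let $\gamma_v,\gamma_n,\beta,\gamma$ denote their
counts, summing to $m+2$. Because the vertex points are in convex position
and no edge point lies on any spurious multi-point line of $P$, a VVe segment
covers exactly three points of $P$ while every other segment covers exactly
two. Hence every vertex point must be a corner of the cyclic tour (a common
endpoint of two consecutive segments), and every edge point must either be a
corner or lie on a VVe segment. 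Writing $s_v$ and $s_e$ for the numbers of
vertex and edge corners, and using $s_v=\alpha+\beta/2$ with
$\alpha=\gamma_v+\gamma_n$, $s_e=\gamma+\beta/2$, and $s_v+s_e=m+2$, the
constraints $s_v\ge n+2$ and $s_e\ge m+1-\gamma_v$ yield
$\gamma_v\ge n+1$ and $\gamma_n+\beta/2\le 1$.

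Along the cyclic sequence of corners, the number of V-to-E transitions
equals the number of E-to-V transitions, and each VE segment contributes one
of each, so $\beta$ is even and $(\gamma_n,\beta)\in\{(0,2),(1,0)\}$. I
would rule out $(1,0)$ as follows: with $\beta=0$ there is no V/E
transition, so all corners have the same type; the all-edge case leaves
vertex points uncovered, while the all-vertex case makes the tour a closed
walk consisting of $m+1$ distinct edges of $G'$ plus a single VVn edge, but
the degrees inherited from $G'$ are $1$ at $s',t'$ and $3$ elsewhere, and a
single extra edge flips at most two parities, leaving at least $n$
odd-degree vertices and precluding any Eulerian circuit. Hence
$(\gamma_n,\beta)=(0,2)$, and the tour splits cyclically into exactly one
maximal ``vertex block'' and one maximal ``edge block'' bracketed by the two
VE segments.

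The vertex block is then a walk in $G'$ of $\gamma_v$ edges; tightness in
the edge-point coverage accounting forces the $\gamma_v$ walk edges to be
distinct edges of $G'$. Because $s'$ and $t'$ have $G'$-degree $1$, they
cannot appear as interior vertices of the walk and must occur as its two
endpoints; because every other vertex has $G'$-degree $3$ and the walk has
distinct edges, each such vertex can be visited at most once as an interior
vertex of the walk. Since all $n+2$ vertex points must be covered, every
non-endpoint vertex is visited exactly once, which forces $\gamma_v=n+1$ and
makes the walk a Hamiltonian $s'$-$t'$ path in $G'$. The main obstacle will
be eliminating the $(\gamma_n,\beta)=(1,0)$ case via the Eulerian-parity
argument and using the tight counting together with the cubic-degree bound
to force the vertex-block walk to traverse each vertex of $G'$ exactly once.
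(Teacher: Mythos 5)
Your proof is correct, but the reverse direction takes a genuinely different route from the paper's. The paper first performs local surgery on the tour --- a shortcut argument (using the pigeonhole principle on the degree-$3$ bound) that normalizes the tour to visit each vertex point exactly once --- and then runs a charging scheme in which each edge point distributes a unit of cost to the segments containing it, concluding that exactly two segments join a vertex point to an edge point and that the vertex--vertex segments form a consecutive Hamiltonian sub-path. You instead avoid any tour modification and do a single global double count of the $m+2$ corner slots, classified as vertex or edge corners, which pins down $(\gamma_n,\beta)=(0,2)$; you then dispose of the degenerate all-vertex-corner configuration by an Eulerian parity argument (all $G'$-degrees are odd, and one extra edge fixes at most two parities) and extract the Hamiltonian path from the tight counting plus the degree-$3$ bound on interior visits. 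Your approach buys a cleaner treatment of multiply-visited vertex points --- the upper bound $s_v\le n+2$ falls out of the degree argument rather than requiring the paper's case analysis of shortcuts --- at the price of needing the parity argument that the paper's normalization renders unnecessary. Two small points to tighten: the constraint $\gamma_n+\beta/2\le 1$ with $\beta$ even also admits $(\gamma_n,\beta)=(0,0)$, where the extra edge in the multigraph is a doubled edge of $G'$ rather than a VVn chord; your parity argument kills this case verbatim, but you should list it. And you should say explicitly why both $s'$ and $t'$ must occur in the vertex block at all (they are extreme points of $P$, hence coverable only as corners, and all vertex corners lie in the unique vertex block), before invoking their degree-$1$ status to place them at the two ends of the walk.
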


\begin{proof}
We first prove the direct implication.
Let $H$ be a Hamiltonian path in $G'$ from $s'$ to $t'$.
We will construct a spanning tour with $m+2$ segments for $P$.
Corresponding to the Hamiltonian path $H$
that visits all $n+2$ vertices in $G'$ using $n+1$ edges,
there is a polygonal chain of $n+1$ segments
that connect the $n+2$ vertex points in $P$ in the same order,
which also cover $n+1$ edge points.
The chain can be extended to visit the remaining $(m+1)-(n+1)=m-n$ edge points
in any order with $m-n$ segments,
and finally closed into a tour with another segment.
The total number of segments is $(n+1) + (m-n) + 1 = m+2$.

We next prove the reverse implication.
Let $\T$ be a spanning tour with $m+2$ segments for $P$.
We will find a Hamiltonian path in $G'$ from $s'$ to $t'$.
We first transform $\T$,
without increasing the number of segments,
into a canonical spanning tour that visits each vertex point exactly once.

\begin{figure}[htb]
\centering\includegraphics[scale=0.75]{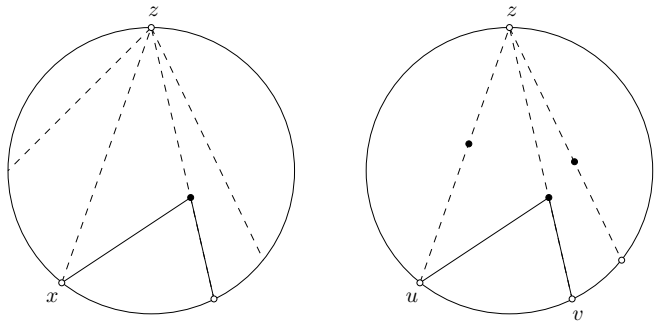}
\caption{%
Transforming $\T$ into a canonical spanning tour that visits each
vertex point exactly once.}
\label{fig:shortcut}
\end{figure}

Suppose that a vertex point $z$ is visited twice in $\T$.
Then there are two pairs of consecutive segments with turns at $z$.
Refer to Figure~\ref{fig:shortcut}.
If, out of these four segments incident to $z$,
there is a segment connecting some point $x$ directly to $z$
with no other point of $P$ in the interior of the segment,
then we can take a shortcut (as in Figure~\ref{fig:shortcut} left)
in the pair of segments including $xz$ to skip a visit to $z$.
Otherwise, each of the four segments must connect $z$ to the vertex
point of a neighbor of the vertex of $z$ in $G'$,
going through the corresponding edge point.
Recall that every vertex in $G'$ has degree at most $3$.
By the pigeonhole principle,
at least two of these four segments must be the same segment, say, $uz$.
If the two copies of $uz$ are consecutive
and form a turn at $z$ in $\T$,
then we can shorten both of them to skip a visit to $z$.
Otherwise,
one copy of $uz$ must form a turn at $z$
with some other segment, say $vz$,
and again we can take a shortcut (as in Figure~\ref{fig:shortcut} right)
to skip a visit of $z$.
Observe that in both cases,
$\T$ remains a spanning tour after the transformation.

Now observe that every edge point is
either in the interior of some segment between two vertex points,
or at a turning point between two consecutive segments.
Associate a cost of $1$ with each edge point.
For each edge point, charge its cost to the segments that contain it:
if it is in the interior of one segment, charge $1$ to the segment;
if it is at a turning point between two consecutive segments,
charge $1/2$ to each segment.
Observe that
\begin{itemize} \itemsep -1pt
\item
each segment between two vertex points is charged $1$ if the segment
contains an edge point, and is charged $0$ otherwise;
\item
each segment between two edge points is charged $1/2 + 1/2 = 1$;
\item
each segment between a vertex point and an edge point
(there must be at least two such segments along the tour
since there are more edge points than vertex points)
is charged exactly $1/2$.
\end{itemize}
Since the number of edge points in $P$ is $m+1$
and the number of segments in $\T$ is $m+2$,
we must have exactly $2$ segments charged $1/2$ each,
and exactly $m$ segments charged $1$ each,
so that $m+1 = 2\cdot 1/2 + m\cdot 1$.
It follows that (i)
there are exactly two segments between vertex points and edge points,
and (ii)
there is no segment connecting two vertex points and containing no edge point.
Condition (i) implies that the segments between vertex points
are consecutive in the tour.
Condition (ii) implies that these consecutive segments correspond to
a Hamiltonian path in $G'$.
Finally, this Hamiltonian path must have $s'$ and $t'$ as the two ends
because each of them has exactly one neighbor.
\end{proof}

\section{APX-hardness of {\sc Min-Max-Turn Hamiltonian Tour}
	and NP-hardness of {\sc Bounded-Turn-Minimum-Length Hamiltonian Tour}}

In this section we prove
Theorems~\ref{thm:hamiltonian} and~\ref{thm:hamiltonian2}.
We first show that {\sc Min-Max-Turn Hamiltonian Tour} is APX-hard
by a gap-preserving reduction from {\sc Covering Points by Lines},
which was proved to be APX-hard in Theorem~\ref{thm:min}.
Let $P$ be a set of $n$ points for the problem {\sc Covering Points by Lines}.
We will construct a set $Q$ of points for the problem
{\sc Min-Max-Turn Hamiltonian Tour}.
\begin{figure}[htb]
\centering\includegraphics[scale=0.6]{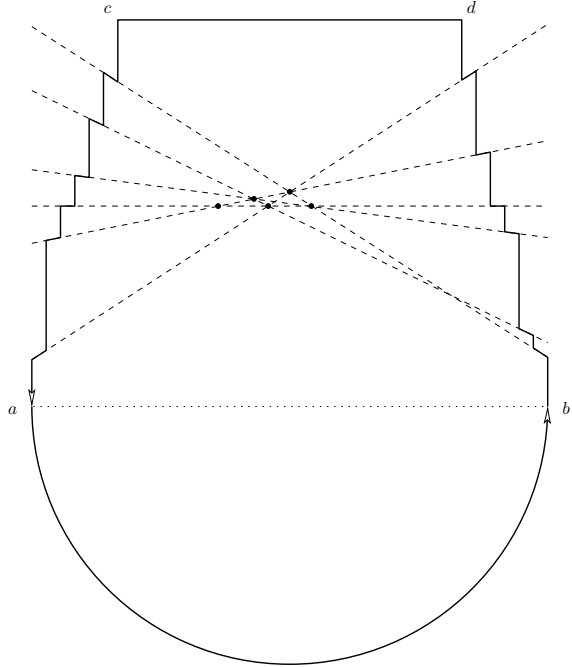}
\caption{%
The construction of the point set $Q$ for {\sc Min-Max-Turn Hamiltonian
Tour} ($s=4 \times 6 +3=27$ in this example).}
\label{fig:hamiltonian}
\end{figure}

Let $\L_P$ be the set of at most $n \choose 2$ lines determined by $P$,
where each line in $\L_P$ goes through at least two points in $P$
(we assume without loss of generality that no line in $\L_P$ is vertical).
Let $\alpha_P$ be the minimum turning angle determined
by any three non-collinear points in $P$;
we will show later in Lemma~\ref{lem:lattice}
that $1/\alpha_P$ may be assumed to be polynomial in $n$.
Let $\alpha := \alpha_P/n^\mu$ for some suitable constant $\mu \ge 1$.

Refer to Figure~\ref{fig:hamiltonian}.
We first construct a simple closed curve $\gamma$ that is composed by
a half-circle and a polygonal chain
joined at their endpoints $a$ and $b$.
The polygonal chain consists of
$s := 2|\L_P| + 2 (|\L_P| + 1) + 1 = O(n^2)$
segments, including $2$ segments from each line in $\L_P$,
$2(|\L_P| + 1)$ vertical segments, and $1$ horizontal segment $cd$.
Observe that $\gamma$ is \emph{monotone} in the horizontal direction,
in the sense that the intersection of every vertical line
with the region enclosed by $\gamma$ is a single line segment.

\begin{figure}[htb]
\centering\includegraphics[scale=0.8]{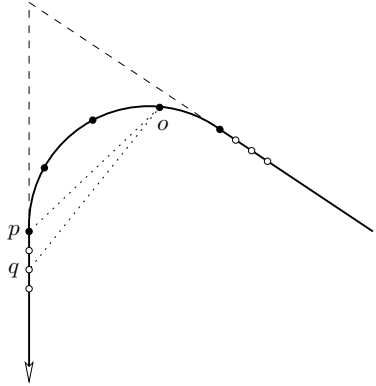}
\caption{%
Smoothing the corner between two consecutive segments
into a circular arc tangent to both segments
($t = 4$ and $n = 3$ in this example).}
\label{fig:corner}
\end{figure}

Refer to Figure~\ref{fig:corner}.
We next transform $\gamma$ into a simple closed curve $\tilde\gamma$
that is not only monotone but also \emph{smooth},
by smoothing the corner between every pair of consecutive segments
in the chain into a small circular arc tangent to both segments.
Then $\tilde\gamma$ is an alternating cycle of $s$ segments
(including the horizontal segment $cd$)
and $s$ circular arcs
(including the half-circle with diameter $ab$).

Put $t := \lceil 9n^2 \pi/\alpha \rceil$.
The point set $Q$ consists of all $n$ points in $P$
and $m := s(t+1+2n)$ points from the curve $\tilde\gamma$.
Take $t+1$ points (including the two endpoints)
from each of the $s$ circular arcs,
which divide any such arc into $t$ sub-arcs with the same central angle
at most $\theta := \pi/t \le \alpha / (9n^2)$.
Take $n$ more points from each of the two segments adjacent
to the circular arc, near the shared endpoints, such that
the following \emph{$\theta$-property} is satisfied:
the angle $\angle poq$,
where $p$ is an endpoint of the arc,
$o$ is any of the other $t$ points in the arc,
and $q$ is any of the $n$ points in the segment containing $p$,
is at most $\theta$.
Observe that the total number of points is $s(t+1+2n)$. 

We have the following two lemmas about the construction:

\begin{lemma}\label{lem:hamiltonian-direct}
If there exists a set of $k$ lines that cover all points in $P$,
for some $k \leq n$,
then there exists a Hamiltonian tour with maximum turning angle at most
$(k+2)\theta$ for $Q$.
\end{lemma}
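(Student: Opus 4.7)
The plan is to construct the Hamiltonian tour by modifying the natural cyclic traversal of $\tilde\gamma$: for each of the $k$ covering lines I splice in a collinear detour through the assigned $P$-points.

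First I would fix a cover $\L^\ast \subseteq \L_P$ of size $k$ and assign each $p \in P$ to one line in $\L^\ast$ containing it, obtaining collinear subsets $P_\ell \subseteq \ell$ for $\ell \in \L^\ast$. The \emph{base tour} of the $m$ curve-points in the cyclic order induced by $\tilde\gamma$ has maximum turning angle at most $\theta$: in the interior of each arc, consecutive sub-arc endpoints subtend the central angle $\theta$, and the $\theta$-property was designed precisely so that any transition between an arc-point and any of the $n$ nearby segment-points on an adjacent segment also turns by at most $\theta$. This base tour, however, omits the $n$ points of $P$.

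For each $\ell \in \L^\ast$, the chain $\gamma$ contains two segments lying along $\ell$, separated by the region containing $P_\ell$. I would splice $P_\ell$ into the tour by choosing a junction point $q_\ell^-$ on one segment along $\ell$ and a junction point $q_\ell^+$ on the other, and inserting the collinear path $q_\ell^- \to p_1 \to \cdots \to p_{|P_\ell|} \to q_\ell^+$ into the base tour at the appropriate place. Since $q_\ell^-$, every point of $P_\ell$, and $q_\ell^+$ all lie on $\ell$, consecutive points in the interior of the inserted path produce turning angle $0$; at $q_\ell^\pm$ themselves the turn is bounded by $\theta$ through the $\theta$-property, because the inserted detour enters and leaves along $\ell$, which is the direction of the host segment of $\gamma$. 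Different detours can reserve disjoint junction points, since each segment-endpoint provides $n \ge k$ of them.

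Finally I would bound the maximum turning angle in the resulting tour. Untouched curve-points still have turning angle at most $\theta$; interior $P$-points along a detour have turning angle $0$; each splice point contributes at most $\theta$. Turns potentially larger than $\theta$ can occur only at curve-points whose local neighbourhood is traversed by several detours, where the small directional shifts introduced by the splicings accumulate. A careful per-point accounting shows that at most $k$ such shifts of magnitude $\theta$ can coincide at any one point, and two further contributions of $\theta$ arise at the boundary near the chain-endpoints $a$ and $b$ and the half-circle that closes $\gamma$, giving the claimed bound of $(k+2)\theta$.

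The main obstacle is this final cumulative accounting. The two-segments-per-line feature of $\gamma$ is essential for guaranteeing that each detour can be oriented along $\ell$ without forcing a sharp U-turn at a point of $P_\ell$, and the monotonicity of $\tilde\gamma$ together with the flexibility of the $n$ junction points per arc-endpoint constrains how detours can overlap; converting these structural facts into a uniform per-point bound of $k\theta$ plus the two boundary contributions is the technical heart of the argument.
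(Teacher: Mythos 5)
Your construction has a genuine structural gap. In the point set $Q$, the two segments of $\gamma$ that lie along a covering line $\ell$ sit on \emph{opposite} sides of the chain (one on the $b$--$d$ side, one on the $a$--$c$ side), and in the cyclic order of $\tilde\gamma$ they are separated by a long stretch of other curve points (all the points on the part of the chain ``below'' $\ell$, including $c$, $d$, and the segments of other lines). A detour $q_\ell^- \to p_1 \to \cdots \to p_{|P_\ell|} \to q_\ell^+$ is therefore not a local splice into the base tour: it is a chord that jumps over every curve point on that intermediate portion of $\tilde\gamma$. After arriving at $q_\ell^+$ the tour must still visit all of those skipped points, and any way of doing so in a single pass forces either a near-$\pi$ U-turn (if you go out along $\ell$ and back, or reverse direction along the host segment at $q_\ell^+$) or yet another crossing of the interior. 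So the single-traversal-plus-detours object you describe is not a Hamiltonian tour of $Q$, and the per-point angle accounting in your last two paragraphs has nothing to attach to.

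The paper resolves exactly this difficulty with a different global structure: it colors the $m$ curve points with $k+1$ colors by residue modulo $k+1$ along $\tilde\gamma$ and makes $k+1$ full rounds of the curve; round $i$ visits essentially only the color-$i$ points and takes a single crossing shortcut along $\ell_i$ to pick up the points of $P$ covered by $\ell_i$, while the curve points skipped by that shortcut are collected in later rounds (the last round sweeps up color $0$ and all leftovers). This is also where the $(k+2)\theta$ bound actually comes from: since each round visits only about one in every $k+1$ points of an arc, three consecutive visited points span up to $2(k+1)$ sub-arcs, giving a turning angle of about $(k+1)\theta$ at \emph{every} arc point, plus $\theta$ for the arc/segment transitions --- not from an accumulation of $k$ small perturbations near the junctions, as your sketch posits. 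Indeed, if your single-pass construction were repairable it would yield a bound of order $\theta$ independent of $k$; the fact that the target is $(k+2)\theta$ is a signal that the intended tour is structurally different from a perturbed traversal of $\tilde\gamma$.
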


\begin{proof}
Let $\L = \{\ell_1,\ldots,\ell_k\}$
be a set of $k$ lines that cover all points in $P$;
without loss of generality, $\L \subseteq \L_P$.
We will construct a Hamiltonian tour with maximum turning angle at most
$(k+2)\theta$ for $Q$.
Index the $m$ points in $Q$ taken from the curve $\tilde\gamma$
by the circular order of their locations along the curve:
$q_1,\ldots,q_m$.
Assign each point a color in $[0,k]$:
$q_i$ has color $i \bmod (k+1)$, $1 \le i \le m$.

The tour consists of $k+1$ rounds.
In the $i$th round, $1 \le i \le k$,
the tour follows the half-circle from $a$ to $b$ and continues along the
chain from $b$ to $d$ until it reaches the line $\ell_i$,
then takes a shortcut along $\ell_i$ from right to left
and continues along the chain from $c$ to $a$
until it reaches the half-circle again;
the tour visits each point of color $i$
in or below the line $\ell_i$ while following the curve,
and visits each point in $P$ that is covered by the line $\ell_i$
(if the point was not visited in previous rounds)
while taking the shortcut.
In the last round, the tour follows the curve entirely
to visit each point of color $0$,
and points of other colors not visited in previous rounds due to the shortcuts.

Consider any three consecutive points $p_1,p_2,p_3$ in the tour.
If the three points are all in some line $\ell_i$ supporting a shortcut,
then obviously $\mathrm{turn}(p_1,p_2,p_3) = 0$.
Otherwise, the three points must come from some sub-curve of $\tilde\gamma$
consisting of a circular arc and an adjacent segment.
The number of sub-arcs of this arc that are between $p_1$ and $p_3$
is at most $2(k+1)$;
each such sub-arc contributes half of its central angle
to the turning angle at $p_2$.
Taking into account the possibility that
the three points are not all in the arc
and using the $\theta$-property,
we have
$\mathrm{turn}(p_1,p_2,p_3) \le 2(k+1)\theta/2 + \theta = (k+2)\theta$.
\end{proof}

\begin{lemma}\label{lem:hamiltonian-reverse}
If there exists a Hamiltonian tour with maximum turning angle at most
$k\theta$ for $Q$, for some $k \leq n$,
then there exists a set of $k$ lines that cover all points in $P$.
\end{lemma}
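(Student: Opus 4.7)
The plan is to argue that the small maximum turning angle forces the tour $\T$ to traverse $\tilde\gamma$ essentially monotonically, so that $\T$ makes at most $k$ interior-crossing detours, each along a line of $\L_P$, collectively covering all of $P$.

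First, I would exploit the smallness of $k\theta$ relative to $\alpha_P$. Since $k\theta \leq n\pi/t \leq \alpha/(9n) \ll \alpha_P$ by the choices $t = \lceil 9n^2\pi/\alpha\rceil$ and $\alpha = \alpha_P/n^\mu$, any three consecutive tour points at least two of which lie in $P$ must be collinear (otherwise the turning would be at least $\alpha_P$, violating the bound). Hence every maximal sub-path of $\T$ through points of $P$ is essentially straight, and, if it contains two or more $P$-points, it lies on a line of $\L_P$.

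Next I would analyze $\T$ on each circular arc of $\tilde\gamma$. By the inscribed-angle theorem, the turning at an interior arc point $a_i$ whose two tour-neighbors lie on the same arc at positions $a_{i-j}$ and $a_{i+l}$ equals $(j+l)\theta/2$ if the neighbors are on opposite sides of $a_i$, and is close to $\pi$ otherwise. The bound $k\theta$ rules out the same-side case and forces $j + l \leq 2k$, so $\T$ restricted to each arc decomposes into monotone passes with per-step sub-arc spacing at most $k$. Combining this with the $\theta$-property for arc-to-segment transitions, a global counting argument (paralleling, in reverse, the $(r+1)\theta$-type estimate of Lemma~\ref{lem:hamiltonian-direct} for $r$ rounds of $\T$ around $\tilde\gamma$) bounds the number of rounds of $\T$, and hence the number of chord-like ``shortcuts'' joining successive rounds across the interior of the region bounded by $\tilde\gamma$, by at most $k$. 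At each shortcut endpoint, the turning bound forces the chord direction to match the local tangent of $\tilde\gamma$ there; since by construction of the chain only the lines of $\L_P$ have matching-direction segments on both sides of the monotone chain, each shortcut must lie on a line of $\L_P$.

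Finally, I would observe that since $\T$ visits every point of $P$, each $P$-point lies in some interior $P$-sub-path of $\T$ and hence on one of the at most $k$ shortcut-supporting lines of $\L_P$, producing the desired cover. The principal obstacle is the global round-counting step: the inscribed-angle estimate cleanly bounds pass-spacing within a single arc, but stitching these per-arc passes into a globally consistent bound on the number of rounds, and ruling out the possibility that a shortcut lies in a ``non-$\L_P$'' tangent direction (vertical segment, horizontal segment $cd$, or a portion of the half-circle) and carries at most one point of $P$, requires carefully coupling the intra-arc turning constraints with the tangent-matching constraints at shortcut endpoints and verifying that such spurious shortcuts cannot suffice to visit all of $P$.
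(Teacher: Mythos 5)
Your proposal has the right skeleton --- decompose the tour into rounds, use monotonicity of $\tilde\gamma$ to get at most one crossing shortcut per round, and use $k\theta < \alpha_P$ to force the $P$-points met on a single shortcut to be collinear --- but the step you yourself flag as ``the principal obstacle,'' namely bounding the number of rounds by $k$, is exactly the heart of the paper's proof, and it is missing from your argument. The paper resolves it not by a global stitching of per-arc passes over all $s$ arcs, but by counting on the half-circle alone: every round begins with a pass over the half-circle, and the turning-angle bound $k\theta$ forces the indices $i_1 < \dots < i_r$ of the half-circle points visited in one round to satisfy $i_1 \le 2k-1$, $i_r \ge t-(2k-1)$, and $i_{j+2}-i_j \le 2k$ (each via the inscribed-angle computation you state). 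Hence each round consumes at least $2\lfloor (t-4k+2)/(2k)\rfloor$ of the $t+1$ half-circle points, and since $t \ge 9n^2 > 8k^2$ this gives $t/r \le \frac{t}{t-6k+2}\,k < k+1$, i.e.\ at most $k$ rounds. Without this (or an equivalent) quantitative argument your proof does not go through, since nothing else prevents the tour from making many short rounds, each with its own shortcut.

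A secondary point: your final worry about ruling out shortcuts in ``non-$\L_P$'' tangent directions is unnecessary in the paper's approach. One does not need each shortcut to be supported by a line of $\L_P$, nor any tangent-matching at shortcut endpoints; it suffices that the points of $P$ visited consecutively during one crossing shortcut are collinear, which follows from $k\theta < \alpha_P$ and the definition of $\alpha_P$ (note that $\alpha_P$ only controls triples of points all in $P$, so your stronger claim about triples with ``at least two'' points in $P$ is not justified, but it is also not needed). The cover is then simply one line per round through the collinear $P$-points of that round's shortcut, at most $k$ lines in all.
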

\begin{proof}
Let $\tau$ be a Hamiltonian tour with maximum turning angle at most
$k\theta$ for $Q$.
We will find a set of $k$ lines that cover all points in $P$.

Break the tour $\tau$ into rounds,
such that each round consists of
some points in the half-circle followed by
some points not in the half-circle (\ie, in the chain or in $P$).
When the angle $\alpha$ is sufficiently small
(hence $k\theta \le k\alpha/(9n^2)$ is even smaller),
in each round the tour
must visit some points in the half-circle in order of their $x$-coordinates,
say, from left to right,
then visit points near some corners of the chain, from right to left.
While in the chain,
the tour may take shortcuts between non-consecutive corners,
but since the curve $\tilde\gamma$ is monotone,
it can take at most one \emph{crossing} shortcut from a corner on the $bd$ side
to a corner on the $ac$ side.
Only when taking such a crossing shortcut can the tour visit some points in $P$.
Moreover,
since $k\theta \le k\alpha/(9n^2) < \alpha_P$,
the points in $P$ that are visited during each crossing shortcut
must be collinear.

We next show that $\tau$ has at most $k$ rounds.
Index the $t+1$ points in the half-circle from left to right
by numbers from $0$ to $t$,
where $a$ has index $0$ and $b$ has index $t$.
Consider an arbitrary round.
Let $r$ be the number of points in the half-circle
that are visited in this round.
Let $i_1,\ldots,i_r$ be the indices of these points, from left to right.
Then we must have $i_1 \le 2k-1$ because otherwise
the turning angle at the point with index $i_1$ would be greater than
$(i_1 + 1)\theta/2 > 2k\,\theta/2 = k\theta$.
Similarly, we must have $i_r \ge t - (2k-1)$,
and $i_{j+2} - i_j \le 2k$ for each $j$, $1 \le j \le r-2$.
Counting in pairs, we have $r \ge 2\, r_2$,
where $r_2 = \bigl\lfloor \frac{t - (2k-1) - (2k-1)}{2k} \bigr\rfloor$.
It follows that the number of rounds in $\tau$ is at most
$$
\frac{t}{r}
\le \frac{t}{2 \bigl\lfloor \frac{t - (2k-1) - (2k-1)}{2k} \bigr\rfloor}
\le \frac{t}{2 \bigl( \frac{t - (2k-1) - (2k-1)}{2k} - 1 \bigr)}
= \frac{t}{t - 6k + 2}\,k.
$$
It is easy to check that $t \ge 9n^2 > 8k^2$
and hence $\frac{t}{t - 6k + 2} < \frac{k+1}k$.
Thus $\tau$ has at most $k$ rounds.
Finally,
since each round has at most one crossing shortcut
that can cover some collinear points in $P$,
all points in $P$ can be covered by $k$ lines.
\end{proof}

Unlike the reductions in previous sections,
our reduction to {\sc Min-Max-Turn Hamiltonian Tour} is numerically sensitive
because the construction depends on a small angle $\alpha_P$
and has points placed precisely on circular arcs.
Even if we reduce from a restricted version of
{\sc Covering Points by Lines},
where the coordinates are polynomial in the number of points
(it can be checked that our proof for the APX-hardness of
{\sc Covering Points by Lines} fulfills this restriction),
it is still not immediately clear that the reduction is polynomial.
To clarify this,
we prove a property concerning lattice points in the next lemma,
which implies that $1/\alpha_P$ may be assumed to be polynomial in
the lattice size $N$:

\begin{lemma}\label{lem:lattice}
Let $a$, $b$, and $c$ be three non-collinear points in the $[0,N] \times [0,N]$
section of the integer lattice, where $N \geq 10$. Then the turning
angle of the path $(a,b,c)$ at point $b$ is at least $1/(3 N^2)$.
\end{lemma}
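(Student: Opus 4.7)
The plan is to extract a lower bound on $\sin\beta$ (where $\beta$ is the turning angle at $b$) from the elementary fact that a non-degenerate lattice triangle has area at least $1/2$, and then pass from $\sin\beta$ back to $\beta$ itself.

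I would begin by introducing the vectors $\vec u = b-a$ and $\vec v = c-b$, which have integer coordinates in $[-N,N]$ and hence Euclidean norm at most $N\sqrt 2$. Since the direction of travel entering $b$ is $\vec u$ and the direction leaving $b$ is $\vec v$, the turning angle $\beta \in [0,\pi]$ at $b$ is exactly the undirected angle between $\vec u$ and $\vec v$---equivalently, $\pi - \angle abc$, matching the paper's definition. In particular, $\sin\beta = |\vec u \times \vec v|/(|\vec u|\,|\vec v|)$, where $\vec u \times \vec v = u_x v_y - u_y v_x$ denotes the standard $2\times 2$ scalar cross product.

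Next I would invoke the lattice hypothesis. The scalar $\vec u \times \vec v$ is an integer, and it is nonzero since $a,b,c$ are non-collinear, so $|\vec u \times \vec v| \ge 1$. Combined with the norm bounds above, this yields $\sin\beta \ge 1/(2N^2)$. Finally, I would convert this into a bound on $\beta$ itself by a trivial case split: if $\beta \ge 1$ then already $\beta > 1/(3N^2)$, while if $\beta < 1 < \pi/2$ then $\sin\beta \le \beta$, so the inequality above gives $\beta \ge 1/(2N^2) > 1/(3N^2)$.

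There is no real obstacle here---the whole argument is a one-line application of the ``primitive lattice triangle has area at least $1/2$'' fact, packaged through the cross-product formula for the sine of an angle. The one subtle point worth flagging in the write-up is the definitional one: the turning angle is the angle between the incoming and outgoing travel vectors $b-a$ and $c-b$, not between the rays from $b$ to $a$ and from $b$ to $c$. These two angles are supplementary, which is exactly why the turning angle (and not the interior angle $\angle abc$) is squeezed to zero when the cross product is small, and so the one controlled by the lattice lower bound. The hypothesis $N\ge 10$ is not actually used; the slack factor $1/3$ in place of $1/2$ just absorbs any constant looseness.
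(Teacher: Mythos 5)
Your proof is correct, and it takes a cleaner route than the paper's. The paper works with the \emph{tangent} of the turning angle via the subtraction formula $\tan(\beta_2-\beta_1)=\frac{t_2-t_1}{1+t_1t_2}$, writing the two slopes as ratios $m_i/n_i$ of integers in $[0,N]$; the key quantity $m_2n_1-m_1n_2$ is a nonzero integer, hence at least $1$, giving $\tan\beta\ge 1/(2N^2)$ after a two-case analysis on whether $t_1t_2\le 1$ or $t_1t_2\ge 1$, and the hypothesis $N\ge 10$ is then used to absorb the loss in converting the bound on $\tan\beta$ back to a bound on $\beta$ (whence the weaker constant $1/(3N^2)$). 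Your argument runs the same integrality fact through the \emph{sine} instead: $|\vec u\times\vec v|$ is exactly the paper's $|m_2n_1-m_1n_2|$ in disguise (twice the triangle's area), but dividing by $|\vec u|\,|\vec v|\le 2N^2$ rather than by $1+t_1t_2$ eliminates the case split, and $\sin\beta\le\beta$ (valid for all $\beta\in[0,\pi]$, so even your split at $\beta=1$ is unnecessary) converts back to the angle with no loss, yielding the stronger bound $1/(2N^2)$ with no need for $N\ge 10$. You also correctly handle the one genuinely delicate point, namely that the turning angle is the angle between the travel vectors $b-a$ and $c-b$ rather than the interior angle $\angle abc$; since the two are supplementary this does not affect the sine, so the cross-product formula applies verbatim. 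In short: same underlying lemma (a nonzero integer $2\times 2$ determinant has absolute value at least $1$), but your packaging is simpler and slightly stronger.
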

\begin{proof}
Let $\beta$ denote the turning angle. We can assume that $0 < \beta <
\pi/2$, since otherwise the inequality holds.
Since $a,b,c$ are non-collinear lattice points,
the tangent function of the turning angle can be expressed as:
$$
\tan \beta = \tan(\beta_2 - \beta_1) =
\frac{\tan \beta_2 - \tan \beta_1}{1+ \tan \beta_1 \tan \beta_2},
$$
where
$\tan \beta_1 = \frac{m_1}{n_1} < \frac{m_2}{n_2}= \tan \beta_2$,
and $m_1, m_2, n_1, n_2$ are nonnegative integers less or equal to $N$.
Write $t_1= \tan \beta_1$ and $t_2= \tan \beta_2$.
We distinguish two cases depending on whether the product $t_1 t_2$ is
smaller or larger than $1$:

{\em Case 1}: $t_1 t_2 \leq 1$. We have
$$
\tan \beta = \frac{t_2 -t_1}{1 + t_1 t_2} \geq  \frac{t_2 -t_1}{2}
= \frac{1}{2} \left( \frac{m_2}{n_2} - \frac{m_1}{n_1} \right) =
\frac{m_2 n_1 - m_1 n _2}{2n_1 n_2} \geq \frac{1}{2n_1 n_2} \geq
\frac{1}{2 N^2}.
$$

{\em Case 2}: $t_1 t_2 \geq 1$. We have
$$
\tan \beta = \frac{t_2 -t_1}{1 + t_1 t_2} \geq  \frac{t_2 -t_1}{2t_1 t_2}
= \frac{1}{2} \left( \frac{1}{t_1} - \frac{1}{t_2} \right)
= \frac{1}{2} \left( \frac{n_1}{m_1} - \frac{n_2}{m_2} \right) =
\frac{m_2 n_1 - m_1 n _2}{2m_1 m_2} \geq \frac{1}{2m_1 m_2} \geq
\frac{1}{2 N^2}.
$$

Since $N \geq 10$ was assumed,
in both cases it follows that $\beta \geq 1/(3 N^2)$, as required.
\end{proof}

Having polynomial representation of the small angles and rational points
on circular arcs is a non-trivial problem~\cite{CDR92,Bu98}.
Without delving too much into technical details such as Lemma~\ref{lem:lattice},
we claim that for any $\delta$, $0 < \delta < 1$,
the construction can use integers polynomial in $n$
for the coordinates of all points in $Q$,
such that the angle determined by any three points in $Q$ deviates by
a multiplicative factor at least $1-\delta$ and at most $1+\delta$.
Consequently,
the reduction is strongly polynomial,
and we have the following approximate versions of
Lemmas~\ref{lem:hamiltonian-direct} and~\ref{lem:hamiltonian-reverse}:
\begin{description}

\item[Lemma~\ref{lem:hamiltonian-direct} (Approximate Version).]
If there exists a set of $k$ lines that cover all points in $P$,
for some $k \leq n$,
then there exists a Hamiltonian tour with maximum turning angle at most
$(1+\delta)(k+2)\theta$ for $Q$.

\item[Lemma~\ref{lem:hamiltonian-reverse} (Approximate Version).]
If there exists a Hamiltonian tour with maximum turning angle at most
$(1-\delta)k\theta$ for $Q$, for some $k \leq n$,
then there exists a set of $k$ lines that cover all points in $P$.

\end{description}

The following lemma shows that {\sc Min-Max-Turn Hamiltonian Tour}
is almost as hard to approximate as {\sc Covering Points by Lines}:

\begin{lemma}\label{lem:hamiltonian}
For any $\rho \ge 1$,
if {\sc Min-Max-Turn Hamiltonian Tour}
admits a polynomial-time approximation algorithm with ratio $\rho$,
then {\sc Covering Points by Lines} admits a polynomial-time
approximation algorithm with ratio $(1 + \eps)\rho$ for any $\eps > 0$.
\end{lemma}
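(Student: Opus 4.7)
The plan is to compose the construction preceding the lemma with the approximate versions of Lemmas~\ref{lem:hamiltonian-direct} and~\ref{lem:hamiltonian-reverse}. Given an instance $P$ of {\sc Covering Points by Lines} and a target accuracy $\eps > 0$, I first fix a small $\delta = \delta(\eps,\rho) > 0$ and build the point set $Q$ using that $\delta$, so that both approximate lemmas apply. Let $k^{*}$ denote the minimum number of lines that cover $P$ and let $\tau^{*}$ denote the minimum possible maximum turning angle of a Hamiltonian tour of $Q$. The approximate direct lemma gives $\tau^{*} \le (1+\delta)(k^{*}+2)\theta$, so running the assumed polynomial-time $\rho$-approximation for {\sc Min-Max-Turn Hamiltonian Tour} on $Q$ yields a tour whose maximum turning angle is at most $\rho(1+\delta)(k^{*}+2)\theta$.

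Next I would set $k'$ to be the smallest positive integer such that $(1-\delta)k'\theta$ is at least the maximum turning angle of the tour just obtained; by the approximate reverse lemma this guarantees a set of $k'$ lines that cover $P$. By construction,
\[
k' \;\le\; \left\lceil \frac{\rho(1+\delta)(k^{*}+2)}{1-\delta} \right\rceil \;\le\; \frac{1+\delta}{1-\delta}\,\rho\,k^{*}\!\left(1+\frac{2}{k^{*}}\right) + 1.
\]
Choosing $\delta$ small enough as a function of $\eps$ and $\rho$ makes the prefactor $(1+\delta)/(1-\delta)$ arbitrarily close to $1$; what remains is the additive slack, which is only problematic when $k^{*}$ is bounded.

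To handle that regime I would dispose of small values of $k^{*}$ separately. Fix a constant $K = K(\eps,\rho)$ large enough that whenever $k^{*} > K$ the displayed inequality yields $k'/k^{*} \le (1+\eps)\rho$. Before invoking the reduction, the algorithm enumerates every subset of lines from $\L_{P}$ of cardinality at most $K$ and tests whether it covers $P$; this takes time $O(n^{2K+1})$, which is polynomial for fixed $K$. If a covering of size at most $K$ is found, output it (this is in fact optimal); otherwise $k^{*} > K$ and the reduction above, with $\delta$ chosen as indicated, returns a covering of size at most $(1+\eps)\rho\, k^{*}$.

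The main obstacle I expect is the interaction between $\delta$, the additive $+2$ slack in Lemma~\ref{lem:hamiltonian-direct}, and the requirement that everything remain polynomial. In particular one must check that the strongly polynomial coordinate representation claimed after Lemma~\ref{lem:lattice} can be produced with the specific multiplicative accuracy $\delta$ in time polynomial in $n$ and $1/\delta$, and that the two approximate versions of the lemmas still hold with that representation; the rest is the routine bookkeeping of balancing $\delta$, $K$ and $\eps$ in the displayed bound.
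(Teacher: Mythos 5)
Your proposal is correct and follows essentially the same route as the paper's proof: apply the approximate direct lemma to bound the optimal turning angle by $(1+\delta)(k^*+2)\theta$, round the angle returned by the $\rho$-approximation up to the least integer multiple of $(1-\delta)\theta$ to extract a line cover via the approximate reverse lemma, and absorb the additive slack (the $+2$ and the ceiling) into the multiplicative factor by brute-forcing the case of small $k^*$ (the paper uses the threshold $k^* \ge 6/\eps$ and picks $\delta$ so that $\frac{1+\delta}{1-\delta} = \frac{1+\eps}{1+\eps/2}$). The caveats you flag about the polynomial coordinate representation are acknowledged but not fully resolved in the paper either.
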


\begin{proof}
Let $P$ be a set of $n$ points for {\sc Covering Points by Lines}.
Let $k^*$ be the minimum number of lines
necessary for covering all points in $P$.
Without loss of generality, we assume that $k^* \ge 6/\eps$, since otherwise
a brute-force algorithm can find $k^*$ lines to cover $P$
in $n^{O(1/\eps)}$ time, which is polynomial time for any fixed $\eps > 0$.

Let $\delta = \Theta(\eps)$ such that
$\frac{1+\delta}{1-\delta} = \frac{1+\eps}{1+\eps/2}$.
Under the assumption that $k^* \ge 6/\eps$, we have the following algorithm:
first construct a point-set $Q$ from $P$ (refer to Figure~\ref{fig:hamiltonian}),
and then run the $\rho$-approximation algorithm for {\sc Min-Max-Turn Hamiltonian Tour}
on $Q$ to obtain a tour of maximum turning angle at most
$(1-\delta)\kappa\theta$ for some $\kappa>0$,
and finally obtain a set $\L$ of at most $k = \lceil \kappa \rceil$ lines 
that cover $P$ by Lemma~\ref{lem:hamiltonian-reverse} (approximate version).

By Lemma~\ref{lem:hamiltonian-direct} (approximate version),
the minimum value of the maximum turning angle of a Hamiltonian tour
for $Q$ is at most $(1+\delta)(k^* + 2)\theta$
(\ie, there exists a Hamiltonian tour with maximum turning angle
bounded as such). 
The $\rho$-approximation algorithm for {\sc Min-Max-Turn Hamiltonian Tour}
guarantees that
$(1-\delta)\kappa\theta \le \rho(1+\delta)(k^* + 2)\theta$
and hence
$\kappa \le \rho\frac{1+\delta}{1-\delta}(k^* + 2)$.
By the assumption that $k^* \ge 6/\eps$,
we have 
$k = \lceil \kappa \rceil \le \rho\frac{1+\delta}{1-\delta}(k^* + 2) +
1 \le \rho\frac{1+\delta}{1-\delta}(k^* + 3) 
= \rho\frac{1+\delta}{1-\delta}(1 + 3/k^*) k^* \le
\rho\frac{1+\delta}{1-\delta}(1+\eps/2) k^* 
= \rho (1+\eps) k^*$.
\end{proof}

Since {\sc Covering Points by Lines} is APX-hard (Theorem~\ref{thm:min}),
the above lemma implies that {\sc Min-Max-Turn Hamiltonian Tour} is APX-hard too.
This completes the proof of Theorem~\ref{thm:hamiltonian}.

\medskip
Consider the decision versions of the two Hamiltonian Tour problems:
\begin{itemize} \itemsep -1pt
\item[(I)]
{\sc Min-Max-Turn Hamiltonian Tour} (Decision Problem):
Given $n$ points in the plane and an angle $\alpha \in [0,\pi]$,
decide whether there exists a Hamiltonian tour with maximum turning angle
at most $\alpha$.
\item[(II)]
{\sc Bounded-Turn-Minimum-Length Hamiltonian Tour} (Decision Problem):
Given $n$ points in the plane, an angle $\alpha \in [0,\pi]$,
and a positive number $L$,
decide whether there exists a Hamiltonian tour with maximum turning angle
at most $\alpha$ and length at most $L$.
\end{itemize}
Observe that the decision problem of 
{\sc Min-Max-Turn Hamiltonian Tour}
is a special case of the decision problem of
{\sc Bounded-Turn-Minimum-Length Hamiltonian Tour},
with the parameter $L$ set to some sufficiently large number,
say, $n$ times the diameter of the point set.
Thus the APX-hardness (indeed NP-hardness suffices)
of {\sc Min-Max-Turn Hamiltonian Tour}
implies that {\sc Bounded-Turn-Minimum-Length Hamiltonian Tour} is NP-hard.
This completes the proof of Theorem~\ref{thm:hamiltonian2}.

\medskip
It is interesting to note that,
while the decision problem of {\sc Bounded-Turn-Minimum-Length Hamiltonian Tour}
has both an \emph{angle} constraint and a \emph{length} constraint,
our proof of its NP-hardness above
(via the reduction from the decision problem of {\sc Min-Max-Turn Hamiltonian Tour})
effectively only uses the angle constraint.
That is, the problem is already hard with the angle constraint alone.
On the other hand,
if the turning angle is unrestricted, \ie, if $\alpha = \pi$,
then the problem {\sc Bounded-Turn-Minimum-Length Hamiltonian Tour}
is the same as the {\sc Euclidean Traveling Salesman Problem},
which is well known to be NP-hard with the length constraint alone.
Our proof
sheds light on a different aspect of the difficulty of the problem.

\section{Concluding remarks}

The obvious question left open by our work is whether 
{\sc Covering Points by Lines} admits an approximation algorithm with
constant ratio. 
Two other problems are finding approximation algorithms for 
{\sc Min-Max-Turn Hamiltonian Tour} and respectively, 
{\sc Bounded-Turn-Minimum-Length Hamiltonian Tour}.

\paragraph{Acknowledgment.}
The authors are grateful to an anonymous reviewer for suggesting the extension
to lines---in our Theorem~\ref{thm:greedy}---of the result
from~\cite{BLWM12} (Theorem~1, p.~1041) for line segments,
and to another anonymous reviewer for pertinent advice on polynomial
representation of points on circular arcs~\cite{CDR92,Bu98}.

\end{document}